\documentclass[lettersize,journal]{IEEEtran}

\usepackage{amsmath,amssymb,amsfonts,amsthm}
\usepackage{algorithm}
\usepackage{algpseudocode}
\usepackage{array,tabularx,makecell,multirow}
\usepackage[caption=false,font=normalsize,labelfont=sf,textfont=sf]{subfig}
\usepackage{textcomp}
\usepackage{stfloats}
\usepackage{url}
\usepackage{verbatim}
\usepackage{graphicx}
\usepackage{graphics}
\usepackage{cite}
\usepackage{xcolor}
\usepackage[top=0.69in, bottom=1.02in, left=0.63in, right=0.63in]{geometry}

\theoremstyle{plain}
\newtheorem{theorem}{\textit{Theorem}}
\newtheorem{lemma}{Lemma}
\newcounter{assumption}

\theoremstyle{remark}

\begin{document}

	\title{Performance Analysis of Fluid Antenna–Assisted Over-the-Air Federated Learning Under Spatially Correlated Fading
}
	
	\author{IEEE Publication Technology,~\IEEEmembership{Staff,~IEEE,}
	}

	\author{
		\IEEEauthorblockN{Mohsen Ahmadzadeh, Saeid Pakravan, Wessam Ajib, Ming Zeng, Ghosheh Abed Hodtani, and Ji Wang}

	\thanks{M. Ahmadzadeh and G. Abed Hodtani are with the Department of Electric and Computer Engineering, Ferdowsi University, Mashhad, Iran. email: m.ahmadzadehbolghan@mail.um.ac.ir; hodtani@um.ac.ir.}

\thanks{S. Pakravan and W. Ajib are with the Department of Computer Sciences, University of Quebec in Montreal (UQAM), Montreal, QC, Canada. email: pakravan.saeid@uqam.ca; ajib.wessam@uqam.ca.}

\thanks{M. Zeng is with the Department of Electrical and Computer Engineering, Laval University, Quebec City, QC, CA. email: ming.zeng@gel.ulaval.ca.}

\thanks{J. Wang is with the Department of Electronics and Information Engineering, College of Physical Science and Technology, Central China Normal University, Wuhan 430079, China. email: jiwang@ccnu.edu.cn.}

	}

	\maketitle

		\begin{abstract}
Fluid antenna (FA) technology has recently emerged as an effective means of exploiting spatial diversity through position-domain reconfigurability. This paper investigates the integration of FA into over-the-air federated learning (OTA-FL) systems with the aim of improving aggregation reliability and user participation under realistic channel conditions. By dynamically selecting antenna positions, FA-equipped users can exploit additional spatial degrees of freedom to realize more favorable channel conditions, thereby increasing the probability of successful contribution to the OTA aggregation process in each communication round. We consider an uplink OTA-FL framework consisting of a single fixed-antenna access point and multiple FA-enabled users operating over spatially correlated fading channels. Unlike existing studies that primarily rely on optimization-based designs or numerical evaluations, we develop a tractable analytical framework that enables a rigorous performance characterization of FA-assisted OTA-FL. In particular, closed-form expressions are derived for the aggregation error outage probability and the expected number of participating users per round. Spatial channel correlation across FA ports is modeled using a copula-based approach, where the Clayton copula is adopted to capture lower-tail dependence relevant to worst-case fading conditions. Numerical results validate the analytical findings and demonstrate that FA-assisted OTA-FL significantly outperforms conventional fixed-antenna schemes in terms of aggregation reliability and participation efficiency, while providing insights under practical system considerations.

\end{abstract}

	\begin{IEEEkeywords}

Fluid antenna systems, over-the-air computation, federated learning, copula-based channel modeling.
        
	\end{IEEEkeywords}

\section{Introduction}

The advent of sixth-generation (6G) wireless networks is expected to enable pervasive intelligence by tightly integrating communication, sensing, and computing functionalities. In this context, federated learning (FL) has emerged as a fundamental paradigm for distributed machine learning, particularly suited to data-driven applications such as the Internet of Things (IoT), autonomous transportation systems, and Industry~4.0 \cite{9829367,7777777}. Unlike centralized learning, FL allows a large number of edge devices to collaboratively train a shared global model while retaining raw data locally, thereby alleviating privacy concerns and reducing the risk of data leakage \cite{9141214, 9264742}. These features make FL especially attractive for 6G networks, where massive connectivity and stringent privacy requirements coexist.

Despite its conceptual appeal, the practical deployment of FL over wireless networks faces severe communication bottlenecks. In each training round, a potentially large number of devices must transmit high-dimensional model updates to a central parameter server or access point (AP). This process incurs substantial latency, consumes excessive spectral resources, and places a heavy burden on the energy-constrained edge devices. As the number of participating users increases, these challenges can limit scalability, convergence speed, and learning accuracy of FL systems, making communication-efficient aggregation mechanisms essential for large-scale deployment.

Over-the-air (OTA) computation has emerged as a promising solution to this problem, particularly in uplink edge-intelligence scenarios where a large number of devices communicate with a central AP over a shared wireless multiple-access channel under stringent latency and bandwidth constraints \cite{10538293, 10092857}. Such settings naturally arise in applications such as industrial IoT, smart sensing, and distributed monitoring, where many edge devices must frequently upload model updates for collaborative learning. By exploiting the natural signal superposition property of wireless channels, OTA computation enables multiple devices to simultaneously transmit their analog model updates, which are aggregated directly in the air at the AP. Compared to conventional orthogonal transmission schemes, OTA-based FL (OTA-FL) can dramatically reduce communication latency and bandwidth consumption, making it a promising enabler of scalable edge intelligence \cite{10388035, 18952884, 10988827}. However, in these practical wireless environments, OTA-FL remains highly sensitive to channel impairments such as fading, noise, and multiuser interference, which can distort the aggregated signal and degrade learning performance. Additionally, unfavorable channel conditions may prevent some devices from reliably participating in aggregation, further limiting system efficiency \cite{10932699, 9738803}.

Recognizing the pivotal role of the wireless propagation environment, recent research efforts have focused on enhancing OTA-FL performance through advanced antenna and propagation technologies. Multiple-input multiple-output (MIMO) techniques exploit spatial degrees of freedom to enable beamforming and interference management, thereby improving aggregation reliability~\cite{10261444, 10444710, 10381609}. Directional antenna systems similarly enhance link quality by concentrating transmit or receive power along preferred spatial directions~\cite{10258328, 9849110,10906511}. However, the performance of such approaches is inherently constrained by fixed radiation patterns and limited adaptability to rapid small-scale fading variations, particularly in dynamic wireless environments. In parallel, reconfigurable intelligent surfaces (RIS) have been proposed to actively manipulate the wireless propagation environment via programmable reflection coefficients~\cite{9829190, 10363658, 10912454, 11342412, 10529197, 10881858}. By reshaping the channel conditions, RIS can significantly improve link reliability and mitigate channel impairments. Nevertheless, these gains typically come at the cost of additional infrastructure, centralized control, and increased channel estimation and signaling overhead. These limitations motivate the exploration of more flexible and infrastructure-light solutions that can adapt to instantaneous channel conditions without relying on external environmental control.

In this context, fluid antennas (FA) have recently emerged as a novel and transformative antenna technology capable of overcoming these limitations~\cite{9715064, 10146274, 11175437, 11106811, pakravan2026fluid1, 10753482}. Unlike conventional fixed-position antennas (FPA), FA allows the radio-frequency (RF) front-end to dynamically relocate within a compact spatial region, thereby exploiting position-domain reconfigurability to enhance spatial diversity \cite{skouroumounis2024swipt}. By selecting the antenna position experiencing the most favorable channel realization, FA can enhance link reliability, user participation, and overall OTA-FL performance without additional antennas or infrastructure, which makes it particularly appealing for dense uplink edge-intelligence scenarios.

Motivated by these advantages, several recent studies have explored the integration of FA technology into OTA-FL systems \cite{saeidpwcnc,ahmadzadeh2025ai,ahmadzadeh2025advancing, saeidp, zhao2025fluid}. Existing works primarily focus on deploying FA at the AP and optimizing antenna positions to maximize the number of participating devices under a prescribed mean squared error (MSE) constraint or to minimize the learning optimality gap. While these contributions demonstrate FA’s potential, they rely heavily on optimization-based frameworks and numerical evaluations. Notably, a rigorous analytical characterization of the fundamental performance limits of FA-aided OTA-FL systems remains largely unexplored.

One of the key challenges in analyzing FA-enabled systems stems from the strong spatial correlation among antenna ports caused by their close physical proximity \cite{ghadi2024physical}. Such correlation significantly alters the statistical behavior of the wireless channel and, consequently, the system performance. Classical spatial correlation models, including those derived from Jakes’ framework \cite{9998496, 10623405, 10716282}, often lead to analytically intractable expressions, limiting their usefulness for theoretical performance analysis. As a result, the impact of spatially correlated fading on OTA-FL aggregation accuracy and device participation has not yet been explicitly quantified.

In this paper, we aim to bridge these gaps by presenting a comprehensive analytical study of an FA-assisted OTA-FL system, where each user is equipped with a single FA while the AP employs a conventional FPA. In contrast to existing works that primarily rely on optimization-based or learning-based designs, our focus is on understanding how FA mobility enhances OTA-FL performance and how spatial correlation among FA ports influences aggregation reliability and user participation. Unlike RIS-assisted FL approaches~\cite{9625822, 9626135, 10649032}, which rely on environment reconfiguration and typically involve joint optimization of reflecting elements and transmission strategies, the proposed framework leverages position-domain reconfigurability at the user side and enables a tractable analytical characterization of aggregation performance. To achieve analytical tractability while capturing realistic dependence structures, we adopt a copula-based statistical framework, specifically, the Clayton copula, to model spatial correlation across FA ports in a tractable manner. This approach enables closed-form performance analysis while capturing realistic dependence structures among fading coefficients and provides new insights into the fundamental benefits and limitations of FA-assisted OTA-FL.
To further highlight these differences, Table~\ref{tab:lit_comparison} summarizes the key distinctions between representative prior works and the proposed framework.

\begin{table*}[t]
\centering
\caption{{Comparison of the proposed work with representative prior studies}}
\label{tab:lit_comparison}
\renewcommand{\arraystretch}{1.2}
\setlength{\tabcolsep}{3.5pt}

{
\begin{tabular}{|p{0.55cm}|p{2.56cm}|c|c|c|c|p{5.2cm}|p{2.55cm}|c|}
\hline
\multirow{2}{*}{\textbf{Ref.}} & \multirow{2}{*}{\textbf{Performance metrics}} & \multicolumn{4}{c|}{\textbf{System model}} & \multirow{2}{*}{\textbf{Main contribution}} & \multirow{2}{*}{\textbf{Proposed method}} \\
\cline{3-6}
 &  & \textbf{OTA} & \textbf{FL} & \textbf{FA deployment} & \textbf{Spatial correlation} &  &  \\
\hline

\cite{saeidpwcnc} & Optimality gap, Training loss / accuracy & \checkmark & \checkmark & AP-side FA & $\times$ & Develops an FA-assisted OTA-FL framework with rigorous convergence analysis & Learning/optimization-based framework \\
\hline

\cite{ahmadzadeh2025ai} & User selection rate, learning accuracy  & \checkmark & \checkmark  & AP-side FA & $\times$  & Proposes an FA-assisted OTA-FL framework with joint optimization of antenna positions, beamforming, and user selection under MSE constraints & Learning/optimization-based framework \\
\hline

\cite{ahmadzadeh2025advancing} & Optimality gap, MSE & \checkmark & \checkmark & AP-side FA& $\times$ & Proposes a unmanned aerial vehicle assisted OTA-FL framework with movable antennas
& Learning/optimization-based framework \\
\hline

\cite{saeidp} & MSE & \checkmark  & $\times$ & AP-side FA & $\times$ & Proposes a FA array-enabled OTA computation framework with robust resource allocation under channel uncertainty& Block coordinate descent method\\
\hline

\cite{zhao2025fluid} & Convergence rate, training loss & \checkmark & \checkmark & AP-side FA & $\times$ & Proposes a FA-enabled OTA-FL framework with joint optimization of device selection, receiver beamforming, and antenna positioning under time-varying channels & Penalty dual decomposition with successive convex approximation \\
\hline

\textbf{This work} & \textbf{MSE + participation statistics (CDF/PMF)} & \checkmark & \checkmark & \textbf{User-side single FA} & \checkmark & \textbf{Proposes an OTA-FL framework that enhances aggregation reliability and user participation, supported by analytical performance characterization under correlated fading} & \textbf{Closed-form analysis + simulation + learning experiments} \\
\hline
\end{tabular}
}

\end{table*}

The main contributions of this paper are summarized as follows:
\begin{itemize}
    \item We develop a comprehensive analytical framework for FA-assisted OTA-FL and derive tractable expressions for key learning performance metrics, including an upper bound on the aggregation MSE and the expected number of participating FL devices.

    \item We model spatially correlated fading across FA ports using a copula-based approach and characterize the effective FA channel gain under correlated Rayleigh fading, providing new insights into the role of spatial correlation.

      \item Through numerical simulations and learning experiments on a real-world dataset, we validate the analytical results and demonstrate the significant performance gains achievable by integrating FAs into OTA-FL systems, highlighting the effects of FA mobility and port correlation on global model accuracy and loss.
\end{itemize}

The remainder of this paper is organized as follows. Section~II describes the system model, including the FL framework, FA communication model, and aggregation error formulation. Section~III presents the analytical performance evaluation of FA-assisted OTA-FL in terms of MSE minimization and device selection maximization. Section~IV evaluates the impact of FA spatial correlation on MSE and the number of participating users. Numerical and learning performance results are provided in Section~V, followed by concluding remarks in Section~VI.

{For clarity, the main symbols and notations used throughout the paper are summarized in Table~\ref{tab:notation}.}

\begin{table}[t]
\centering
\caption{ Main symbols and notations}
\label{tab:notation}
\renewcommand{\arraystretch}{1.2}
\setlength{\tabcolsep}{4pt}

{
\begin{tabular}{|c|p{6.5cm}|}
\hline
\textbf{Symbol} & \textbf{Description} \\
\hline

$K$ & Number of user devices (UDs) \\
\hline

$\mathcal{K}$ & Set of all user devices \\
\hline

$\mathcal{S}$ & Set of participating users  \\
\hline

$|\mathcal{S}|$ & Number of participating users \\
\hline

$N$ & Number of FA ports \\
\hline

$W$ & Normalized antenna aperture size \\
\hline

$\lambda$ & Carrier wavelength \\
\hline

$h_{k,n}$ & Channel coefficient between UD $k$ and AP at FA port $n$ \\
\hline

$h_k$ & Effective channel after FA port selection \\
\hline

$p_k$ & Transmit scaling factor of UD $k$ \\
\hline

$p_{\max}$ & Maximum transmit power constraint \\
\hline

$\boldsymbol{u}_k$ & Local model update vector of UD $k$ \\
\hline

$\boldsymbol{y}$ & Received signal at the AP \\
\hline

$\boldsymbol{z}$ & Additive white Gaussian noise (AWGN) \\
\hline

$\sigma^2$ & Noise variance \\
\hline

$\eta$ & Denoising factor \\
\hline

$\boldsymbol{e}$ & Aggregation error vector \\
\hline

$\tau$ & MSE threshold for user participation \\
\hline

$\delta$ & Channel gain threshold for user selection \\
\hline

$\Gamma$ & Number of participating users \\
\hline

$\beta$ & Copula dependence parameter \\
\hline

$\Theta_k$ & Transformed variable $(p_{\max}|h_k|^2)^{-1}$ \\
\hline

$F_{|h_k|^2}(x)$ & CDF of effective channel gain \\
\hline

$\boldsymbol{w}$ & Global model parameter \\
\hline

$F(\boldsymbol{w})$ & Global loss function \\
\hline

$\gamma$ & Learning rate \\
\hline

$\mu$ & Polyak--Łojasiewicz (PL) constant \\
\hline

$L$ & Smoothness constant of the loss function \\
\hline

$\sigma_g^2$ & Variance of stochastic gradients \\
\hline

$\kappa$ & Upper bound on gradient norm \\
\hline

\end{tabular}
}

\end{table}

\section{System Model}

We consider an uplink OTA-FL system consisting of $K$ user devices (UDs), indexed by $\mathrm{UD}_k$, $k \in \mathcal{K} \triangleq \{1,2,\ldots,K\}$, and a single AP. Each UD is equipped with a single FA, while the AP employs a conventional FPA. The UDs collaboratively train a global learning model hosted at the AP by periodically exchanging model updates through OTA aggregation. The overall architecture of the proposed FA-assisted OTA-FL system is illustrated in Fig.~\ref{fig:system_model}.

\begin{figure}[!t]
	\centering
	\includegraphics[width=8cm, height=4.2cm]{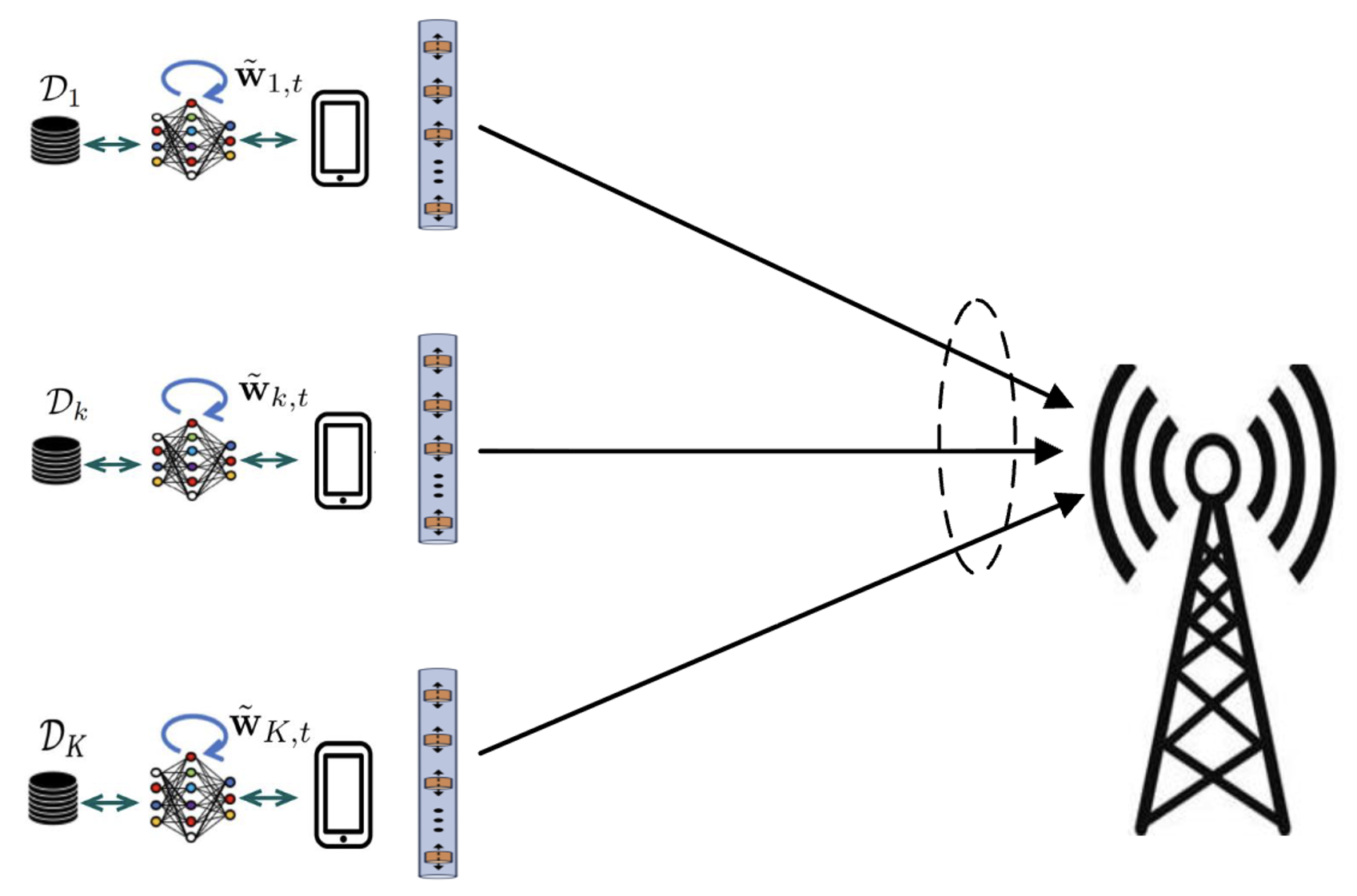}
	\caption{An illustration of the proposed FA-assisted OTA-FL system.}
	\label{fig:system_model}
\end{figure}

\subsection{FL Model}

The objective of FL is to minimize a global empirical loss function defined as the average of the local loss functions across all participating UDs. Specifically, at the \(t\)-th communication round, the global optimization problem is given by
\begin{equation}
\label{goal}
\min_{\boldsymbol{w}_t} F(\boldsymbol{w}_t)
=
\min_{\boldsymbol{w}_t}
\frac{1}{K}
\sum_{k=1}^{K}
F_k(\boldsymbol{w}_t),
\end{equation}
where \(\boldsymbol{w}_t \in \mathbb{R}^{d}\) denotes the global model parameter vector of dimension \(d\),and $F_k(\boldsymbol{w}_t)$ represents the local empirical loss function at $\mathrm{UD}_k$
, evaluated using its locally available data samples.

To solve \eqref{goal}, the standard federated averaging (FedAvg) algorithm~\cite{18952884} is adopted over \(T\) communication rounds. Each training round consists of the following three stages.

\textbf{1) User selection:}
At the beginning of the \(t\)-th round, the AP selects a subset of UDs \(\mathcal{S}_t \subseteq \mathcal{K}\) to participate in the training process and broadcasts the current global model parameters \(\boldsymbol{w}_t\) to the selected devices.

\textbf{2) Local training:}
Each selected UD performs local model optimization using stochastic gradient descent (SGD) based on a randomly sampled minibatch of its locally stored samples. Specifically, the local model update at \(\mathrm{UD}_k\) is given by
\begin{equation}
\boldsymbol{w}_{k,t}
= \boldsymbol{w}_t
- \gamma \nabla \hat{F}_k(\boldsymbol{w}_t;\zeta_k),
\end{equation}
where \(\gamma \in (0,1)\) denotes the learning rate and \(\hat{F}_k(\boldsymbol{w}_t;\zeta_k)\) is the empirical loss evaluated over the minibatch \(\zeta_k\). The extension to multiple local update steps is straightforward.

\textbf{3) Model aggregation:}
After completing local training, the selected UDs transmit their updated model parameters to the AP. The AP then aggregates the received updates to form the global model for the next round as
\begin{equation}
\label{modelaggregation}
\boldsymbol{w}_{t+1}
= \frac{1}{|\mathcal{S}_t|}
\sum_{k \in \mathcal{S}_t}
\boldsymbol{w}_{k,t}.
\end{equation}

The above procedure is iteratively executed until convergence is achieved or a predefined maximum number of training rounds \(T\) is reached.

\subsection{Communication Model}

We focus on the uplink OTA aggregation phase, where all selected UDs transmit simultaneously over a shared wireless channel.

Each UD employs a single-RF-chain FA with $N$ predefined antenna ports distributed over a linear aperture of length $W\lambda$, where $\lambda$ is the carrier wavelength. The relative displacement of the $n$-th FA port is given by
\begin{equation}
d_n
= \frac{n-1}{N-1} W\lambda,
\qquad n = 1,2,\ldots,N.
\end{equation}

To exploit position-domain diversity, each UD dynamically selects the FA port that maximizes the instantaneous channel power gain. We assume that the AP has perfect knowledge of the instantaneous fading coefficients for all FA ports. The resulting effective uplink channel gain for $\mathrm{UD}_k$ is therefore defined as
\begin{equation}
\label{eq:effective_channel}
|h_k|^{2}
\triangleq
\max
\big\{
|h_{k,1}|^{2}, |h_{k,2}|^{2}, \ldots, |h_{k,N}|^{2}
\big\},
\end{equation}
where $h_{k,n}$ denote the complex small-scale fading coefficient between $\mathrm{UD}_k$ and the AP when the $n$-th FA port is activated. 

Due to the compact spatial arrangement of FA ports, the channel coefficients are generally spatially correlated. Under a two-dimensional isotropic scattering environment, the spatial correlation between FA ports follows Jakes' model~\cite{ghadi2023fluid} and is expressed as
\begin{equation}
\mathbb{E}\!\left[h_{k,i} h_{k,j}^{\ast}\right]
=
\delta^{2}
J_{0}
\!\left(
2\pi
\frac{|i-j|}{N-1}
W
\right),
\end{equation}
where $\delta^{2}$ denotes the large-scale fading power and $J_{0}(\cdot)$ is the zeroth-order Bessel function of the first kind. \footnote{{The adopted model captures spatial correlation among FA ports but does not explicitly account for electromagnetic mutual coupling. This abstraction is used to preserve analytical tractability. Incorporating mutual coupling into the FA channel model, particularly for dense port configurations, is an important direction for future work.}}

Each selected UD encodes its local model update into a transmit symbol vector \(\boldsymbol{u}_k
\triangleq
\boldsymbol{w}_{k,t}\), which is normalized such that
\(\mathbb{E}\!\left[\boldsymbol{u}_k \boldsymbol{u}_k^{H}\right] = \mathbf{I}_d\).
The UD then activates the FA port corresponding to the strongest instantaneous channel realization and applies a power-scaling factor to facilitate OTA aggregation. For notational simplicity, the training round index \(t\) is omitted in the following analysis.

Accordingly, the received signal at the AP is expressed as
\begin{equation}
\boldsymbol{y}
=
\sum_{k \in \mathcal{S}}
p_k \, h_{k} \, \boldsymbol{u}_k
+
\boldsymbol{z},
\end{equation}
where \(p_k\) denotes the transmit power-scaling factor of \(\mathrm{UD}_k\), and $\boldsymbol{z} \in \mathbb{C}^{1\times d}$ is an additive white Gaussian noise (AWGN) matrix whose entries are independently and identically distributed as $\mathcal{CN}(0,\sigma^2)$. Additionally, we assume that the additive noise $\boldsymbol{z}$ is independent of the fading coefficients, which ensures that the aggregation error arises solely from channel distortions and power scaling mismatches.

Each UD is subject to a maximum transmit power constraint, given by
\begin{equation}
\label{power}
\frac{1}{d}\, p_k^{2}
\le p_{\max},
\qquad \forall k \in \mathcal{K}.
\end{equation}

Consistent with the ideal model aggregation, the desired aggregated signal at the AP is defined as
\(\boldsymbol{y}_{\mathrm{tar}} = \sum_{k \in \mathcal{S}} \boldsymbol{u}_k\), where the final normalization by $|\mathcal{S}|$ is performed at the AP to recover the arithmetic mean in \eqref{modelaggregation}.
Based on the received signal, the AP constructs an estimate of the aggregated model as
\begin{equation}
\label{rt}
\boldsymbol{y}_{\mathrm{est}}
=
\frac{1}{\sqrt{\eta}}
\sum_{k \in \mathcal{S}}
p_k
h_{k}
\boldsymbol{u}_k
+
\frac{1}{\sqrt{\eta}} \boldsymbol{z}.
\end{equation}
As a result, the OTA aggregation error is defined as
\begin{equation}
\label{eq:aggregation_error}
\boldsymbol{e}
\triangleq
\boldsymbol{y}_{\mathrm{tar}}
-
\boldsymbol{y}_{\mathrm{est}}
=
\sum_{k \in \mathcal{S}}
\left(
1
-
\frac{p_k h_k}{\sqrt{\eta}}
\right)
\boldsymbol{u}_k
-
\frac{1}{\sqrt{\eta}}
\boldsymbol{z}.
\end{equation}

The design of the transmit scaling factors $\{p_k\}$ and the denoising parameter $\eta$, which jointly govern the aggregation MSE and the number of participating UDs, will be addressed in the subsequent section.

\section{Performance Analysis}

In this section, we analytically investigate the impact of FAs on the learning performance of OTA-FL systems. Following \cite{ahmadzadeh2025ai}, the learning performance is characterized in terms of the optimality gap, defined as the difference between the achieved global loss and the minimum attainable loss of the underlying learning problem.

In OTA-FL, the convergence behavior is fundamentally governed by two key factors: i) the aggregation accuracy at the AP, typically quantified by the aggregation MSE, and ii) the number of UDs participating in each communication round. {These factors capture the effects of communication distortion and statistical efficiency, respectively.}

{To explicitly establish this connection, we adopt the optimality-gap framework in \cite{ahmadzadeh2025ai} and specialize it to the considered FA-assisted system. Let $\mathrm{MSE}_t$ denote the aggregation MSE and $\mathcal{S}_t$ the set of participating users at communication round $t$.}\footnote{{The result follows standard convergence analysis for federated optimization under smoothness and the Polyak--Łojasiewicz condition; see \cite{ahmadzadeh2025ai} for details.}}

{\begin{theorem}
\label{thm:convergence}
Under standard assumptions for federated optimization, the optimality gap after $T$ communication rounds satisfies
\begin{equation}
\small
\mathbb{E}[F(\boldsymbol{w}_{T+1})] - F(\boldsymbol{w}^{\ast})
\leq
\psi^{T} \!\left(\mathbb{E}[F(\boldsymbol{w}_1)] - F(\boldsymbol{w}^{\ast})\right)
+
\sum_{t=1}^{T} \psi^{T-t} \Delta_t,
\end{equation}
where $\psi = 1 - \gamma \mu$, with $\gamma$ denoting the learning rate and $\mu$ the Polyak--Łojasiewicz constant. The residual term is given by
\begin{equation}
\Delta_t =
2 \gamma \kappa \left(1 - \frac{|\mathcal{S}_t|}{K} \right)^2
+
\frac{\gamma^2 L}{|\mathcal{S}_t|^2}
\sum_{k \in \mathcal{S}_t} \frac{\sigma_g^2}{|\zeta_k|}
+
\frac{L}{2} \, \mathrm{MSE}_t,
\end{equation}
where $L$ is the smoothness constant of the loss function, $\sigma_g^2$ is the gradient variance, and $\kappa$ bounds the gradient norm.\footnote{{For notational simplicity, the time index $t$ is omitted in the following.}}
\end{theorem}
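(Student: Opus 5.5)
We prove a one-step contraction of the form $\mathbb{E}[F(\boldsymbol{w}_{t+1})]-F(\boldsymbol{w}^\ast)\le \psi\,(\mathbb{E}[F(\boldsymbol{w}_t)]-F(\boldsymbol{w}^\ast))+\Delta_t$ and then unroll it over $t=1,\dots,T$. The unrolling is routine: it produces the factor $\psi^{T}$ on the initial gap and the weights $\psi^{T-t}$ on each $\Delta_t$, provided $0<\psi<1$, which holds when $\gamma\mu<1$. The work is therefore in the single-round inequality.

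\textbf{Decomposing the update.} With the OTA estimate, the global update is $\boldsymbol{w}_{t+1}=\frac{1}{|\mathcal{S}_t|}(\boldsymbol{y}_{\mathrm{tar}}-\boldsymbol{e})$. Substituting the local SGD step gives
\begin{equation*}
\boldsymbol{w}_{t+1}=\boldsymbol{w}_t-\gamma\,\boldsymbol{g}_t-\boldsymbol{e}_t', \qquad
\boldsymbol{g}_t=\frac{1}{|\mathcal{S}_t|}\sum_{k\in\mathcal{S}_t}\nabla\hat F_k(\boldsymbol{w}_t;\zeta_k),
\end{equation*}
where $\boldsymbol{e}_t'$ is the normalized aggregation error. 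We then split $\boldsymbol{g}_t$ into three pieces: the full gradient $\nabla F(\boldsymbol{w}_t)$, a bias term coming from using only the subset $\mathcal{S}_t$ rather than all $K$ users, and zero-mean minibatch noise.

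\textbf{Applying smoothness.} $L$-smoothness gives
\begin{equation*}
F(\boldsymbol{w}_{t+1})\le F(\boldsymbol{w}_t)+\langle\nabla F(\boldsymbol{w}_t),\boldsymbol{w}_{t+1}-\boldsymbol{w}_t\rangle+\tfrac{L}{2}\|\boldsymbol{w}_{t+1}-\boldsymbol{w}_t\|^2.
\end{equation*}
Each term is then handled as follows.
\begin{itemize}
\item The descent term $-\gamma\|\nabla F(\boldsymbol{w}_t)\|^2$ is converted via the Polyak--Łojasiewicz inequality $\|\nabla F\|^2\ge 2\mu(F-F^\ast)$. This yields the factor $1-2\gamma\mu$, which is at most $\psi=1-\gamma\mu$; we keep the weaker constant so that part of the descent is left to absorb cross terms.
\item For the selection bias we write $\frac{1}{|\mathcal{S}|}\sum_{k\in\mathcal{S}}-\frac{1}{K}\sum_{k\in\mathcal{K}}$ in terms of the missing users. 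Using bounded gradient norms (bounded by $\kappa$) and a Cauchy--Schwarz bound on the inner product, this contributes $2\gamma\kappa(1-|\mathcal{S}_t|/K)^2$, following the standard derivation in \cite{ahmadzadeh2025ai}.
\item The minibatch noise is independent across users with per-sample variance $\sigma_g^2$. Its squared norm in the quadratic term therefore gives $\frac{\gamma^2 L}{|\mathcal{S}_t|^2}\sum_{k}\sigma_g^2/|\zeta_k|$; the factor $L$ instead of $L/2$ comes from $\|a+b\|^2\le 2\|a\|^2+2\|b\|^2$.
\item The aggregation error is zero-mean in the noise $\boldsymbol{z}$. We assume the design of $p_k,\eta$ makes the channel-mismatch part unbiased, or else absorb it into $\mathrm{MSE}_t$. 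Its cross terms with the gradient then vanish in expectation and the remaining $\frac{L}{2}\mathbb{E}\|\boldsymbol{e}'_t\|^2$ is exactly $\frac{L}{2}\mathrm{MSE}_t$.
\end{itemize}

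\textbf{Main obstacle.} The difficult step is controlling the cross terms between the selection bias, the aggregation error, and the true gradient without losing more than $\gamma\mu$ of the contraction. This is the reason for the specific constants $2\gamma\kappa$ and $L$ in $\Delta_t$. Our first approach is Young's inequality, with the weight chosen so that the resulting $\|\nabla F\|^2$ pieces are absorbed by the spare half of the descent term, which leaves exactly $\psi$. Since the theorem explicitly defers to the standard assumptions of \cite{ahmadzadeh2025ai}, we would match their assumption set, namely unbiasedness of the OTA error conditional on the channels and gradient norms bounded by $\kappa$, and verify that each constant reproduces.
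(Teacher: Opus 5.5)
Your skeleton (a one-round descent inequality from $L$-smoothness and the PL condition, then unrolling) is the route the paper intends; the paper gives no derivation of its own and only defers to \cite{ahmadzadeh2025ai}. The single-round step fails as you wrote it at the selection-bias term. Write $\boldsymbol{a}=\nabla F(\boldsymbol{w}_t)$ and $\boldsymbol{b}_t=\frac{1}{|\mathcal{S}_t|}\sum_{k\in\mathcal{S}_t}\nabla F_k(\boldsymbol{w}_t)-\boldsymbol{a}$. Cauchy--Schwarz with bounded gradients gives $\gamma|\langle\boldsymbol{a},\boldsymbol{b}_t\rangle|\le\gamma\sqrt{\kappa}\cdot 2(1-|\mathcal{S}_t|/K)\sqrt{\kappa}=2\gamma\kappa(1-|\mathcal{S}_t|/K)$. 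This is linear, not quadratic, in $1-|\mathcal{S}_t|/K$, so it is strictly weaker than the claimed $\Delta_t$ whenever $0<|\mathcal{S}_t|<K$. The square appears only if the cross term is paired with the quadratic smoothness term. Assume $\gamma\le 1/L$; you never state this step-size condition, but you need it anyway to control $\frac{L\gamma^2}{2}\|\boldsymbol{a}\|^2$, and with your factor-$2$ split of the minibatch noise it becomes $\gamma\le 1/(2L)$. Then the exact identity
\[
-\gamma\langle\boldsymbol{a},\boldsymbol{a}+\boldsymbol{b}_t\rangle+\tfrac{\gamma}{2}\|\boldsymbol{a}+\boldsymbol{b}_t\|^2=-\tfrac{\gamma}{2}\|\boldsymbol{a}\|^2+\tfrac{\gamma}{2}\|\boldsymbol{b}_t\|^2
\]
shows that exactly half of the descent is consumed. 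The PL inequality turns $-\frac{\gamma}{2}\|\boldsymbol{a}\|^2$ into $-\gamma\mu\,(F(\boldsymbol{w}_t)-F(\boldsymbol{w}^\ast))$, which is where $\psi=1-\gamma\mu$ comes from. The bound $\|\boldsymbol{b}_t\|\le 2(1-|\mathcal{S}_t|/K)\sqrt{\kappa}$, with one factor from the over-weighted selected users and one from the missing users, then gives $\frac{\gamma}{2}\|\boldsymbol{b}_t\|^2\le 2\gamma\kappa(1-|\mathcal{S}_t|/K)^2$. This needs $\kappa$ to bound the \emph{squared} local gradient norm, $\|\nabla F_k\|^2\le\kappa$; if $\kappa$ bounds the norm itself you get $\kappa^2$. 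Your ``main obstacle'' paragraph points in this direction. However, a Young step applied to $\langle\boldsymbol{a},\boldsymbol{b}_t\rangle$ alone, without also absorbing the $\frac{L\gamma^2}{2}\|\boldsymbol{a}\|^2$ piece of the quadratic term, does not leave exactly $\psi$.

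There are also two smaller points on the error term. First, since $\boldsymbol{e}'_t=\boldsymbol{e}/|\mathcal{S}_t|$, you get $\frac{L}{2}\mathbb{E}\|\boldsymbol{e}'_t\|^2=\frac{L}{2|\mathcal{S}_t|^2}\mathrm{MSE}_t$, which is bounded by $\frac{L}{2}\mathrm{MSE}_t$ but not equal to it. Second, your fallback of absorbing a biased channel-mismatch error into $\mathrm{MSE}_t$ would not work. A biased error produces a first-order cross term $\langle\nabla F(\boldsymbol{w}_t),\mathbb{E}[\boldsymbol{e}'_t]\rangle$ that $\frac{L}{2}\mathrm{MSE}_t$ does not control. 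The fallback is unnecessary, though: under the paper's zero-forcing choice of $p_k$ one has $p_k h_k/\sqrt{\eta}=1$ exactly. Hence $\boldsymbol{e}=-\boldsymbol{z}/\sqrt{\eta}$ is zero-mean and independent of $\boldsymbol{w}_t$ and the minibatches, and all cross terms vanish. With these repairs, plus $\psi\ge 0$ for the unrolling (which follows from $\gamma\le 1/L$ and $\mu\le L$), your argument reproduces $\Delta_t$, with slack in the MSE term.
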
}

{\begin{proof}
The result follows by adapting the convergence analysis in \cite{ahmadzadeh2025ai}. The detailed derivation is omitted for brevity.
\end{proof}}

{Theorem~\ref{thm:convergence} shows that the convergence behavior is jointly determined by the aggregation distortion and the number of participating users. In particular, selecting the learning rate such that $|\psi|<1$ ensures that the first term decays with $T$, and the optimality gap is primarily governed by the residual term $\Delta_t$.}

{Since $\Delta_t$ increases with the aggregation MSE and decreases with the number of participating users, improving convergence requires reducing the aggregation error while maintaining a sufficiently large participation set. This observation provides the theoretical foundation for the analysis developed in the following subsections.}

{Motivated by this observation, the considered FA-assisted OTA-FL system can be interpreted through two coupled design objectives. For a given user set $\mathcal{S}$, the first objective is to minimize the aggregation MSE by properly designing the transmit scaling factors and the denoising factor under individual power constraints:
\begin{equation}
\label{11}
\begin{aligned}
\min_{\{p_k\},\,\eta} \quad & \mathrm{MSE} \\
\text{s.t.} \quad & \frac{1}{d}p_k^2 \leq p_{\max}, \;\forall k \in \mathcal{S}.
\end{aligned}
\end{equation}}

{Based on the resulting MSE expression, the second objective is to maximize the number of participating users subject to a target MSE constraint:
\begin{equation}
\label{12}
\begin{aligned}
\max_{\mathcal{S}\subseteq\mathcal{K}} \quad & |\mathcal{S}| \\
\text{s.t.} \quad & \mathrm{MSE} \leq \tau.
\end{aligned}
\end{equation}}

These formulations show explicitly the intrinsic trade-off between aggregation accuracy and user participation, which directly governs the convergence performance of OTA-FL. In the following, we first solve \eqref{11} and then characterize the feasible user set induced by \eqref{12}.

\subsection{Aggregation MSE Minimization}

In OTA-FL, global model aggregation is performed via analog OTA computation, where the objective is to minimize the discrepancy between the desired aggregated model update and its estimate at the AP. This discrepancy is quantified by the MSE of the aggregation error vector \(\boldsymbol{e}\) at a given training round, as defined in~\eqref{eq:aggregation_error}. Specifically, the aggregation MSE is given by
\begin{equation}
\label{mse}
\mathrm{MSE}
= \mathbb{E}\!\left[ \left\| \boldsymbol{e} \right\|^2 \right] \\
=
\sum_{k \in \mathcal{S}}
\left|
1
-
\frac{p_k h_{k}}{\sqrt{\eta}}
\right|^2
+
\frac{d}{\eta}\sigma^2.
\end{equation}

To minimize the MSE, the transmit power-scaling coefficients \(\{p_k\}\) are designed according to a zero-forcing (ZF) amplitude inversion strategy~\cite{chen2023joint}, which compensates for channel distortions by aligning the effective channel gains of all participating UDs. Specifically, the optimal transmit scaling factor for \(\mathrm{UD}_k\) is given by
\begin{equation}
\label{pk}
p_k
=
\sqrt{\eta}\,
\frac{h_{k}^{H}}{|h_{k}|^{2}},
\qquad \forall k \in \mathcal{S},
\end{equation}
which ensures coherent aggregation of the transmitted model updates at the AP.

To satisfy the individual transmit power constraint in~\eqref{power}, the denoising factor \(\eta\) must satisfy
\begin{equation}
\label{eta}
\eta
\le
d\, p_{\max} |h_{k}|^{2},
\qquad \forall k \in \mathcal{S}.
\end{equation}
Since \(\eta\) must be chosen uniformly for all participating UDs, it is determined by the UD experiencing the weakest effective channel gain.

Substituting~\eqref{pk} and~\eqref{eta} into~\eqref{mse}, the resulting aggregation MSE simplifies to
\begin{equation}
\label{lll}
\mathrm{MSE}
=
\frac{\sigma^2}{p_{\max}}
\max_{k \in \mathcal{S}}
\frac{1}{|h_{k}|^{2}}.
\end{equation}

This equality provides an important insight into the behavior of OTA-FL systems. Specifically, the aggregation MSE is determined by the user with the weakest effective channel gain, i.e., the minimum $|h_k|^2$ among all participating users. This reflects a bottleneck effect inherent to channel inversion–based OTA aggregation, where all users must align their signals, and the transmit power is limited by the worst channel condition.

As a result, even if most users experience strong channel gains, the overall aggregation accuracy is constrained by the weakest user. This highlights the critical importance of improving the minimum effective channel gain across users. In this context, FA-enabled users can exploit position-domain diversity to select the best antenna port, thereby increasing their effective channel gain. Consequently, FA selection reduces the likelihood of extremely weak channels, leading to lower aggregation MSE and enabling more users to satisfy the participation condition.

\subsection{Participating User Maximization}

Beyond aggregation accuracy, the convergence behavior of OTA-FL is strongly influenced by the number of participating UDs in each training round. As shown in~\cite{ahmadzadeh2025ai}, the optimality gap decreases with the number of participating users and increases with the aggregation MSE. Consequently, improving convergence requires maximizing the number of participating UDs while ensuring that the aggregation MSE does not exceed a prescribed threshold \(\tau\).

Based on the MSE expression in~\eqref{lll}, the set of UDs eligible to participate in a given training round is determined as
\begin{equation}
\label{userpolicy1}
\begin{aligned}
\mathcal{S}
&=
\left\{
k \in \mathcal{K}
\;\middle|\;
\mathrm{MSE} \le \tau
\right\}
\\
&\overset{(a)}{=}
\left\{
k \in \mathcal{K}
\;\middle|\;
\frac{\sigma^2}{p_{\max}}
\max_{k \in \mathcal{S}}
\frac{1}{|h_{k}|^{2}}
\le \tau
\right\}
\\
&\overset{(b)}{=}
\left\{
k \in \mathcal{K}
\;\middle|\;
|h_{k}|^{2} \ge \delta
\right\}
\\
&\overset{(c)}{=}
\left\{
k \in \mathcal{K}
\;\middle|\;
\max \left\{
|h_{k,1}|^{2}, |h_{k,2}|^{2}, \ldots, |h_{k,N}|^{2}
\right\}
\ge \delta
\right\},
\end{aligned}
\end{equation}
where \((a)\) follows from \eqref{lll}, \((b)\) is obtained by defining the threshold
\(\delta \triangleq \frac{\sigma^2}{\tau p_{\max}}\), and \((c)\) follows from the effective FA channel gain definition in~\eqref{eq:effective_channel}.
The proposed FA port selection assumes the availability of instantaneous channel state information (CSI) across all candidate ports. In practice, this requires pilot-based estimation, where the overhead scales linearly with the number of users and FA ports, i.e., $KN\tau_p$ per communication round, with $\tau_p$ denoting the pilot duration per port. 
This additional overhead increases the round duration and may become non-negligible in fast-fading environments. Under a fixed training time budget, it reduces the number of effective communication rounds, thereby reducing the convergence speed in real time. Therefore, the proposed framework is most suitable for quasi-static or moderately time-varying channels, while low-overhead CSI acquisition for fast-fading scenarios remains an important direction for future work.

\textbf{Remark 1:}
Equations~\eqref{lll} and~\eqref{userpolicy1} highlight the pivotal role of the effective channel gain $|h_{k}|^{2}$ in determining both aggregation accuracy and user participation. FA-enabled UDs exploit position-domain diversity by activating the port with the strongest instantaneous channel realization, thereby reducing aggregation MSE and increasing the number of eligible users per round. However, strong spatial correlation among FA ports reduces channel gain diversity and limits these benefits, motivating the statistical characterization of FA spatial correlation presented in the next section.

\textbf{Remark 2:}
The user selection policy in \eqref{userpolicy1} is based on instantaneous channel conditions, which raises potential fairness concerns across multiple communication rounds. In particular, users experiencing persistently weak channel realizations may be selected less frequently, leading to an imbalance in long-term participation. Nevertheless, under the adopted block-fading model, channel realizations vary randomly across rounds, which provides an inherent degree of temporal fairness, as users may experience favorable channel conditions at different times. In addition, the use of FAs enhances fairness by allowing each user to select the antenna position with the strongest instantaneous channel gain, thereby increasing its likelihood of meeting the participation threshold.\footnote{Despite these advantages, ensuring strict fairness or guaranteeing uniform participation across users requires additional scheduling or resource allocation mechanisms, which are beyond the scope of this work and constitute an interesting direction for future research.}

For clarity, the main steps of the proposed FA-assisted OTA-FL procedure are summarized in Algorithm~\ref{alg:fa_otafl}.\footnote{For each communication round, the proposed scheme estimates the channel gains over all $N$ FA ports for each of the $K$ users and selects the strongest port, resulting in a complexity of $\mathcal{O}(KN)$. The complexity of local model training is model-dependent and is not included in this analysis.}

\begin{algorithm}[t]
\caption{Proposed FA-Assisted OTA-FL Procedure}
\label{alg:fa_otafl}
\begin{algorithmic}[1]
\Require User set $\mathcal{K}$, FA ports $N$, power constraint $p_{\max}$, MSE threshold $\tau$
\State Initialize global model $\boldsymbol{w}_1$
\For{each communication round $t=1,\ldots,T$}
    \State Broadcast $\boldsymbol{w}_t$ to all users
    \For{each user $k \in \mathcal{K}$}
        \State Estimate $\{|h_{k,n}|^2\}_{n=1}^N$
        \State Select FA port and obtain effective gain using (5)
    \EndFor
    \State Determine participating set $\mathcal{S}$ using (19)
    \State Set denoising factor $\eta$ using (17)
    \For{each $k \in \mathcal{S}$}
        \State Perform local update
        \State Compute $p_k$ using (16)
    \EndFor
    \State OTA aggregation and update $\boldsymbol{w}_{t+1}$
\EndFor
\end{algorithmic}
\end{algorithm}

\section{Theoretical Analysis}

In this section, we present a rigorous theoretical characterization of the FA-assisted OTA-FL system introduced in the previous sections. Our objective is to analytically quantify the impact of FA port selection and spatial correlation on both aggregation accuracy and user participation. Specifically, we derive closed-form expressions for the distribution of the aggregation MSE and the number of selected UDs, while explicitly accounting for correlated fading across FA ports. 

Furthermore, we characterize the distribution of the effective FA channel gain resulting from port selection. This analysis provides insights into how spatial correlation fundamentally constrains the diversity gains achievable through FA-based port selection, and consequently, how it impacts the overall learning performance of OTA-FL systems.

\subsection{Statistical Characterization of OTA-FL}

To evaluate the impact of FA-assisted transmission on OTA-FL performance, we focus on minimizing the aggregation error for a given number of selected devices, denoted by \(S = |\mathcal{S}|\). From \eqref{lll}, the normalized aggregation MSE can be expressed as
\begin{equation}
\label{newMSE}
\frac{\mathrm{MSE}}{\sigma^{2}}
= \min_{\mathcal{S} \subseteq \mathcal{K}} \; \max_{k \in \mathcal{S}}
\frac{1}{p_{\max} |h_{k}|^{2}}.
\end{equation}

The fading coefficients are modeled as block-fading variables that are independently drawn across communication rounds. Consequently, any function of these coefficients such as the aggregation MSE is a random variable (RV). 

To facilitate distributional analysis, we define the RV
\(\Theta_k \triangleq (p_{\max} |h_{k}|^{2})^{-1}\). Then, its cumulative distribution function (CDF) can be calculated as follows:
\begin{equation}
\label{FKXK}
\begin{aligned}
F_{\Theta_k}(\theta_k)
&= \Pr(\Theta_k < \theta_k) = \Pr(|h_{k}|^{2} > (p_{\max} \theta_k)^{-1}) \\
&= 1 - \Pr(|h_{k}|^{2} < (p_{\max} \theta_k)^{-1}).
\end{aligned}
\end{equation}

Based on \eqref{newMSE}, the normalized MSE can be written as
\begin{equation}
\gamma \triangleq \frac{\mathrm{MSE}}{\sigma_n^2} 
= \min_{\mathcal{S} \subseteq \mathcal{K}} \; \max_{k \in \mathcal{S}} \Theta_k.
\end{equation}

To ensure that the normalized aggregation MSE remains below a prescribed threshold \(\tau\), we evaluate the probability \(\Pr(\gamma < \tau)\). Let the RVs \(\{\Theta_k\}_{k=1}^{K}\) be ordered as
\(\Theta_{1:K} \le \Theta_{2:K} \le \cdots \le \Theta_{K:K}\).
Under this ordering, \(\gamma\) corresponds to the \(|\mathcal{S}|\)-th order statistic, i.e., \(\gamma = \Theta_{|\mathcal{S}|:K}\). Accordingly, the CDF of \(\gamma\) is expressed as
\begin{equation}
\label{disMSE}
\begin{aligned}
&F_{\gamma}(\tau)
= \Pr(\gamma < \tau) = 1 - \Pr\!\left(\Theta_{|\mathcal{S}|:K} \geq \tau\right) \\
&\overset{(a)}{=} 1 - \sum_{i=0}^{|\mathcal{S}|-1} \binom{K}{i}
\big[\Pr\!\left(\Theta_i < \tau\right)\big]^{i}
\big[\Pr\!\left(\Theta_i > \tau\right)\big]^{\,K-i}\\
&\overset{(b)}{=} 1 - \sum_{i=0}^{|\mathcal{S}|-1} \binom{K}{i}
\big[1 - \Pr\!\left(|h_{k}|^{2} < (p_{max} \tau)^{-1}\right)\big]^{i} \\&\hspace{2cm}
\big[\Pr\!\left(|h_{k}|^{2} < (p_{max} \tau)^{-1}\right)\big]^{\,K-i},
\end{aligned}
\end{equation}
where \((a)\) follows from order-statistics theory and a binomial counting argument, and \((b)\) follows from \eqref{FKXK}.

For the user selection policy in \eqref{userpolicy1}, the participating set can be equivalently written as
\begin{equation}
	\label{userpolicy}
	\begin{aligned}
		\mathcal{S} 
		&\overset{}{=} \left\{ k \in \mathcal{K} \;\middle|\; (p_{\max}|h_{k}|^2)^{-1} \le {\tau}/{\sigma^2} \right\}.\\
		&\overset{}{=} \left\{ k \in \mathcal{K} \;\middle|\; \Theta_k \le {\tau}/{\sigma^2} \right\}.
	\end{aligned}
\end{equation}
Let \(\Gamma = |\mathcal{S}|\) denote the number of selected UDs. Modeling \(\Gamma\) as a binomial RV, its probability mass function (PMF) is expressed as:
\begin{equation}
\label{disuser}
\begin{aligned}
&\Pr\!\left(\Gamma = |\mathcal{S}|\right)\\&
=\binom{K}{|\mathcal{S}|}
   \big[\Pr(\Theta_k \le {\tau}/{\sigma^2} )\big]^{|\mathcal{S}|}
   \big[\Pr(\Theta_k > {\tau}/{\sigma^2} )\big]^{K-|\mathcal{S}|}
   \\&
=\binom{K}{|\mathcal{S}|}
   \big[1 - \Pr\!\left(|h_{k}|^{2} < \sigma^2(p_{max} \tau)^{-1}\right)\big]^{|\mathcal{S}|} \\& \hspace{2cm}
   \times \big[\Pr\!\left(|h_{k}|^{2} < \sigma^2(p_{max} \tau)^{-1}\right) \big]^{K-|\mathcal{S}|}. 
   \end{aligned}
\end{equation}

\textbf{Remark 3:} 
Equations~\eqref{disMSE} and~\eqref{disuser} demonstrate that both the aggregation MSE outage probability and the user participation statistics are fundamentally dictated by the distribution of the effective FA channel gain \(|h_{k}|^{2}\). Consequently, an accurate characterization of this distribution, particularly under spatial correlation, is essential for precise evaluation of learning performance in FA-assisted OTA-FL systems.

\subsection{Statistical Characterization of the Effective Channel Gain}

In this subsection, we derive the CDF of the effective FA channel gain \(|h_{k}|^{2}\), defined in \eqref{eq:effective_channel}, while explicitly accounting for spatial correlation among FA ports. 

To model statistical dependence among FA ports, we adopt a copula-based framework grounded in Sklar’s theorem.

\begin{theorem}[Sklar’s Theorem~\cite{nelsen2006introduction}]
Let \(\mathbf{X} = (X_1,\ldots,X_J)\) be a random vector with joint CDF \(F_{\mathbf{X}}\) and marginal CDFs \(F_{X_j}\). Then, there exists a copula \(C:[0,1]^J \rightarrow [0,1]\) such that
\[
F_{\mathbf{X}}(x_1,\ldots,x_J)
=
C\!\left(F_{X_1}(x_1),\ldots,F_{X_J}(x_J)\right).
\]
\end{theorem}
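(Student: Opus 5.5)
We prove the existence part of Sklar's theorem, since that is what the paper uses. The plan is to build $C$ directly from $F_{\mathbf{X}}$ via (generalized) inverses of the marginals. In the continuous case this is the probability integral transform. In the general case we either randomize the transform or use a subcopula-extension argument.

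\textbf{Step 1 (continuous marginals).} Assume first that each $F_{X_j}$ is continuous, which covers the fading power gains $|h_{k,n}|^2$ relevant here. Set $U_j \triangleq F_{X_j}(X_j)$. Continuity gives
\[
\Pr(U_j\le u)=\Pr\!\bigl(X_j\le F_{X_j}^{-1}(u)\bigr)=u, \qquad u\in[0,1],
\]
where $F^{-1}(u)=\inf\{x:F(x)\ge u\}$ is the generalized inverse. So each $U_j$ is uniform on $[0,1]$. Define $C$ as the joint CDF of $(U_1,\ldots,U_J)$. Being a joint distribution function with uniform margins, $C$ is automatically a copula: it is grounded, $J$-increasing, and has uniform margins.

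\textbf{Step 2 (verify the identity).} We need $\{X_j\le x_j\}=\{U_j\le F_{X_j}(x_j)\}$ up to a null set. The inclusion $\subseteq$ holds because $F_{X_j}$ is nondecreasing. For the reverse inclusion, note that $U_j\le F_{X_j}(x_j)$ while $X_j>x_j$ forces $X_j$ to lie in an interval on which $F_{X_j}$ is flat. Such intervals carry zero probability, so the difference is a null set. Intersecting over $j$ yields $F_{\mathbf{X}}(x_1,\ldots,x_J)=C\bigl(F_{X_1}(x_1),\ldots,F_{X_J}(x_J)\bigr)$.

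\textbf{Step 3 (general marginals; the main obstacle).} When some marginal has atoms, $F_{X_j}(X_j)$ is no longer uniform, and the construction of Step 1 fails. We would use the distributional transform. Take $V\sim\mathrm{Unif}[0,1]$ independent of $\mathbf{X}$ and set
\[
U_j=F_{X_j}(X_j^-)+V\bigl(F_{X_j}(X_j)-F_{X_j}(X_j^-)\bigr).
\]
One then checks two facts:
\begin{itemize}
\item $U_j$ is uniform on $[0,1]$;
\item $X_j=F_{X_j}^{-1}(U_j)$ almost surely.
\end{itemize}
Together with the Galois relation $F^{-1}(u)\le x \iff u\le F(x)$, these give $\{X_j\le x_j\}=\{U_j\le F_{X_j}(x_j)\}$ almost surely. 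Steps 1 and 2 then go through verbatim with $C$ the CDF of $(U_1,\ldots,U_J)$.

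The delicate point is the uniformity computation at atoms. On the event $\{X_j=x\}$, $U_j$ is spread uniformly over $[F(x^-),F(x)]$. Summing these pieces with the continuous part gives Lebesgue measure.

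If a formal route is preferred, the alternative is to define a subcopula on $\prod_j \mathrm{Ran}\,F_{X_j}$ and extend it by multilinear interpolation, as in \cite{nelsen2006introduction}. We regard verifying $J$-increasingness of that extension as the technical core of that route.

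Uniqueness of $C$ on $\prod_j \mathrm{Ran}\,F_{X_j}$, and hence everywhere in the continuous case, follows immediately from the identity. It is not needed for the subsequent Clayton-copula modeling.
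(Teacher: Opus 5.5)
Your proof is correct, but there is no paper proof to match it against. The paper simply imports Sklar's theorem from Nelsen's book, and the standard proof there is the subcopula argument you list only as a fallback.

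\textbf{The cited route.} It first shows $|F_{\mathbf{X}}(\mathbf{x})-F_{\mathbf{X}}(\mathbf{y})|\le\sum_{j}|F_{X_j}(x_j)-F_{X_j}(y_j)|$. Hence $F_{\mathbf{X}}$ depends on $\mathbf{x}$ only through the marginal values, and it defines a subcopula on $\prod_j\mathrm{Ran}\,F_{X_j}$. This subcopula is extended by continuity to the closure, then to $[0,1]^J$ by multilinear interpolation. The laborious step is checking that the extension is $J$-increasing.

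\textbf{Your route.} You instead define $C$ as the law of a uniformized vector, so the copula axioms come for free. The work moves into measure-theoretic facts: flat stretches of $F_{X_j}$ are null, $U_j$ is uniform, and $X_j=F_{X_j}^{-1}(U_j)$ a.s. You identify all of these correctly. Using one common $V$ is legitimate, because the law of each $U_j$ involves only $(X_j,V)$.

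\textbf{What each buys.} The analytic route is deterministic and shows where the non-uniqueness lives, namely off $\prod_j\overline{\mathrm{Ran}\,F_{X_j}}$. In fact, independent randomizers $V_1,\ldots,V_J$ in your Step~3 reproduce exactly Nelsen's multilinear extension. Your common $V$ selects a different, equally valid copula whenever several coordinates can sit at atoms simultaneously.

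Your route is shorter and dimension-free. It also yields the representation $X_j=F_{X_j}^{-1}(U_j)$ with $(U_1,\ldots,U_J)\sim C$. Run backwards, the same quantile transform proves the converse half of Sklar's theorem: $C(F_{X_1}(x_1),\ldots,F_{X_J}(x_J))$ is a joint CDF for every copula $C$. That converse is what actually licenses positing a Clayton copula with exponential marginals in Lemma~\ref{lemma:clayton_cdf}, and sampling from it for the Monte Carlo curves. For those continuous marginals your Steps~1--2 already suffice, and the copula is unique.
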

This theorem implies that the dependence structure among RVs can be fully captured independently of their marginal distributions. To model spatial correlation among FA ports, we employ the Clayton copula, which captures lower-tail dependence a critical feature in wireless systems dominated by deep fading events~\cite{ghadi2023fluid}. The choice of the Clayton copula is motivated by its ability to capture lower-tail dependence, which corresponds to the joint occurrence of deep fading events across closely spaced FA ports. This property is particularly relevant in OTA-FL systems, where the aggregation performance is dominated by the weakest effective channel conditions. In contrast, copulas such as Gumbel primarily model upper-tail dependence, while the Frank copula does not exhibit explicit tail dependence \cite{silva2008copula}. Therefore, the Clayton copula provides a suitable and tractable model for capturing the dependence structure most critical to the considered system.

For \(J\) dimensions, the Clayton copula is defined as
\begin{equation}
C(u_1,\ldots,u_J)
=
\left(
\sum_{j=1}^{J} u_j^{-\beta} - J + 1
\right)^{-1/\beta},
\end{equation}
where \(\beta \in [-1,\infty)\) is the dependence parameter, with \(\beta = 0\) corresponding to independence.

Under Rayleigh fading, the channel power gain at each FA port, $\{|h_{k,n}|^{2}\}_{n=1}^{N}$, follows an exponential distribution with unit mean, i.e., $\{|h_{k,n}|^{2}\}_{n=1}^{N} \sim \mathrm{Exp}(1)$~\cite{ghadi2023fluid,besser2020copula}. The corresponding CDF and PDF are
\begin{equation}
\label{exp}
F_{|h_{k,n}|^{2}}(x) = 1 - e^{-x}, \quad 
f_{|h_{k,n}|^{2}}(x) = e^{-x}, \quad x \ge 0.
\end{equation}
By leveraging the Clayton copula to model dependence among FA ports, the distribution of the maximum channel power gain is obtained in the
following lemma.

\begin{lemma}
\label{lemma:clayton_cdf}
Assuming \(\{|h_{k,n}|^{2}\}_{n=1}^{N}\) follow correlated exponential distributions modeled by a Clayton copula with parameter \(\beta > -1\), the CDF of the effective FA channel gain
\(
|h_{k}|^{2} = \max_n |h_{k,n}|^{2}
\)
is given by
\begin{equation}
\label{cdf_sk}
F_{|h_{k}|^{2}}(x)
=
\left(
\sum_{i=1}^{N} (1 - e^{-x})^{-\beta}
- N + 1
\right)^{-1/\beta}, \quad x \ge 0.
\end{equation}
\end{lemma}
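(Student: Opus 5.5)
The proof is a direct application of Sklar's theorem to the event that the maximum lies below $x$. The key identity is that, for any $x\ge 0$,
\[
\{\max_n |h_{k,n}|^{2}\le x\}=\bigcap_{n=1}^{N}\{|h_{k,n}|^{2}\le x\}.
\]
Hence the CDF of the effective gain is the joint CDF of $(|h_{k,1}|^{2},\ldots,|h_{k,N}|^{2})$ evaluated on the diagonal $(x,\ldots,x)$:
\[
F_{|h_{k}|^{2}}(x)=F_{\mathbf{X}}(x,\ldots,x),\qquad \mathbf{X}=(|h_{k,1}|^{2},\ldots,|h_{k,N}|^{2}).
\]
Since the marginals are continuous, the events with $<$ and $\le$ have equal probability, so the paper's convention $\Pr(|h_k|^2<x)$ gives the same value.

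Next, apply Sklar's theorem with $J=N$ and the Clayton copula $C$ as the modeling assumption on the dependence. This gives
\[
F_{\mathbf{X}}(x,\ldots,x)=C\big(F_{|h_{k,1}|^{2}}(x),\ldots,F_{|h_{k,N}|^{2}}(x)\big).
\]
By \eqref{exp}, each marginal equals $u(x)\triangleq 1-e^{-x}$. Substituting $u_j=u(x)$ for all $j$ into the $N$-dimensional Clayton form yields
\[
\Big(\sum_{i=1}^{N}(1-e^{-x})^{-\beta}-N+1\Big)^{-1/\beta}.
\]
This is exactly \eqref{cdf_sk}; it can also be written compactly as $\big(N(1-e^{-x})^{-\beta}-N+1\big)^{-1/\beta}$.

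The main obstacle is justifying that the expression is well defined and is a valid CDF over the stated parameter range, not the computation itself.

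First, at $\beta=0$ the formula must be read as its limit. Write the expression as $\exp\!\big(-\tfrac{1}{\beta}\log(1+N(u^{-\beta}-1))\big)$ and expand $u^{-\beta}-1\approx-\beta\log u$. This gives $u^N=(1-e^{-x})^N$, the independent-ports result, as a consistency check.

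Second, for $\beta<0$ the multivariate Clayton function is a genuine copula only for restricted ranges. In general dimension $N$ one needs $\beta\ge -1/(N-1)$, and the generalized form with $\max\{\cdot,0\}$ must be used. I would note that $N u^{-\beta}-N+1\ge 0$ is needed, which may fail for small $u$ when $\beta<0$. So I would state the result for $\beta>0$, the lower-tail-dependent case the paper motivates, and append the $[\cdot]_+$ truncation for the negative range.

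Finally, I would verify the boundary behavior. The expression tends to $0$ as $x\to0^+$, since $u^{-\beta}\to\infty$ for $\beta>0$, and to $1$ as $x\to\infty$. It is also nondecreasing in $x$. These checks confirm that \eqref{cdf_sk} is a proper distribution function.
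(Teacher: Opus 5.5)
Your proof is correct and follows the same route as the paper: write the CDF of the maximum as the joint CDF evaluated at $(x,\ldots,x)$, invoke Sklar's theorem with the $N$-dimensional Clayton copula, and substitute the exponential marginal $1-e^{-x}$. Your additional well-definedness remarks go beyond the paper's proof and are accurate. The $\beta\to 0$ limit does recover $(1-e^{-x})^{N}$. For negative $\beta$ the formula does need the $\max\{\cdot,0\}$ truncation, and for $N\ge 3$ it is a valid copula only when $\beta\ge -1/(N-1)$, so the lemma's stated range $\beta>-1$ is too broad as written.
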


\begin{proof}
The CDF of $|h_{k}|^{2}$ can be written as
\begin{equation}
\label{proof_step1}
\begin{aligned}
F_{|h_{k}|^{2}}(x)
&= \Pr\!\left( |h_{k}|^{2} < x \right) \\
&= \Pr\!\left( \max \{ |h_{k,1}|^{2}, |h_{k,2}|^{2}, \ldots, |h_{k,N}|^{2} \} < x \right) \\
&= \Pr\!\left( |h_{k,1}|^{2} < x, \ldots, |h_{k,N}|^{2} < x \right) \\
&= F_{|h_{k,1}|^{2}, \ldots, |h_{k,N}|^{2}}(x, \ldots, x).
\end{aligned}
\end{equation}
Invoking Sklar’s theorem, the joint CDF can be expressed in terms of the marginal CDFs and a copula function as
\begin{equation}
F_{|h_{k,1}|^{2},\dots,|h_{k,N}|^{2}}(x,\dots,x) 
= \Big( \sum_{n=1}^{N} F_{|h_{k,n}|^{2}}(x)^{-\beta} - N + 1 \Big)^{-1/\beta}.
\end{equation}
Substituting the exponential CDF from \eqref{exp} completes the proof.
\end{proof}

By substituting the CDF of the channel gain from \eqref{cdf_sk} into \eqref{disMSE}, the CDF of the normalized MSE is obtained as
\begin{equation}
\small
\label{gamma2}
\begin{aligned}
&F_{\gamma}(\tau) =\\& 1 - \sum_{i=0}^{|\mathcal{S}|-1} \binom{K}{i} 
\Bigg[ 1 - \Bigg( \sum_{j=1}^{N}  (1 - e^{-(p_{\max} \tau)^{-1}})^{-\beta}  -N + 1 \Bigg)^{-\frac{1}{\beta}} \Bigg]^{i} \\
&\quad \times \Bigg[ \Bigg( \sum_{j=1}^{N} (1 - e^{-(p_{\max} \tau)^{-1}})^{-\beta}  - N + 1 \Bigg)^{-\frac{1}{\beta}} \Bigg]^{K-i},
\end{aligned}
\end{equation}
Similarly, substituting the channel gain CDF into \eqref{disuser} yields the PMF of the number of participating devices:
\begin{equation}
\label{disuser2}
\begin{aligned}
&\Pr\!\left(\Gamma = |\mathcal{S}|\right)=\\&
\binom{K}{|\mathcal{S}|}
   \big[1 - \left( \sum_{n=1}^{N}
 (1 - e^{\sigma^2(p_{max} \tau)^{-1}})^{-\beta}  - N + 1
\right)^{-\frac{1}{\beta}}\big]^{|\mathcal{S}|} \\& \times 
   \big[\left( \sum_{n=1}^{N}
(1 - e^{\sigma^2(p_{max} \tau)^{-1}})^{-\beta} -N + 1
\right)^{-\frac{1}{\beta}}\big]^{K-|\mathcal{S}|}. 
   \end{aligned}
\end{equation}

Finally, the dependence parameter \(\beta\) provides a quantitative measure of spatial correlation among FA ports. When \(\beta \to 0\), FA ports become statistically independent, yielding maximal selection diversity and optimal learning performance. As \(\beta\) increases, lower-tail dependence intensifies, increasing the likelihood of simultaneous deep fades across FA ports, thereby degrading both aggregation accuracy and user participation. In the asymptotic regime \(\beta \to \infty\), FA ports exhibit perfect positive dependence, and FA-based port selection offers no diversity gain over conventional FPA. This regime defines the fundamental performance limit of FA-assisted OTA-FL under severe spatial correlation.

These results underscore the critical importance of accurately modeling spatial dependence when designing and analyzing FA-assisted OTA-FL systems.

\section{Numerical Results}

In this section, we evaluate the performance of the proposed FA-assisted OTA-FL framework through extensive MC simulations and learning experiments. The impact of spatial correlation among FA ports is explicitly investigated to validate the analytical expressions derived for the CDF of the normalized aggregation MSE, $F_{\gamma}(\tau)$, and the PMF of the number of participating devices, $\Pr(\Gamma = |\mathcal{S}|)$. Unless otherwise stated, all numerical results are obtained by averaging over $10^{4}$ independent MC realizations.

We consider an uplink OTA-FL system with $K=20$ single-antenna UDs, each equipped with an FA architecture comprising $N=10$ selectable antenna ports. The maximum transmit power is set to $p_{\max}=10$~dBm. The noise variance is set to $\sigma^2 = -90$ dBm, and the large-scale fading power is normalized as $\delta^2 = 1$. Spatial correlation among FA ports is controlled by the dependence parameter $\beta$, where smaller values correspond to weaker correlation.

To benchmark the proposed framework, the following configurations are considered:
\begin{itemize}
\item \textbf{Independent FA}: This case corresponds to $\beta \rightarrow 0$, where FA ports experience statistically independent fading, yielding the maximum selection diversity.
\item \textbf{Correlated FA}: This scenario 
considers statistically correlated FA ports with correlation parameters \(\beta = 1\) and \(2\), enabling an investigation of their impact on system performance.

\item \textbf{FPA}: This benchmark corresponds to the limit $\beta \rightarrow \infty$, where all antenna ports experience identical fading and no spatial diversity gain is available.
\end{itemize}

Fig.~\ref{fig:2a} depicts the CDF of the normalized MSE, $F_{\gamma}(\tau)$, for $|\mathcal{S}|=15$ participating devices under different spatial correlation conditions. Analytical curves derived in \eqref{gamma2} are plotted alongside MC simulation results, showing an excellent match across all scenarios and thereby validating the accuracy of the analytical framework.
As expected, $F_{\gamma}(\tau)$ is a monotonically increasing function of the aggregation MSE threshold $\tau$, indicating a higher probability of meeting the target learning accuracy as the threshold is relaxed. It is observed that the independent FA case consistently achieves the steepest CDF curve, reflecting the highest probability of attaining a given MSE target due to maximal spatial diversity. As spatial correlation increases, i.e., for $\beta = 1$ and $\beta = 2$, the CDF curves exhibit a rightward shift, indicating a gradual degradation in learning performance. Nevertheless, even under correlated fading, FA-assisted OTA-FL substantially outperforms the conventional FPA benchmark, which suffers from the absence of antenna selection gain. This degradation is due to the reduced effectiveness of spatial diversity under correlated fading, where channel gains across FA ports become increasingly dependent, thereby limiting the benefit of port selection.

\begin{figure}[htbp]
    \centering
    \includegraphics[width=8cm, height=4.5cm]{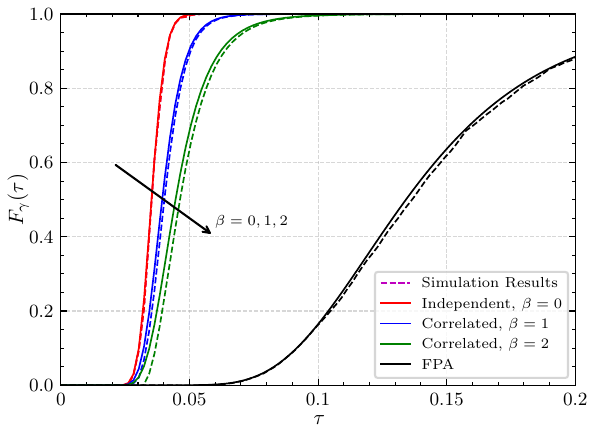}
    \caption{The CDF of the normalized MSE for $|\mathcal{S}|=15$.}
    \label{fig:2a}
\end{figure}

Fig.~\ref{fig:2} presents the PMF of the number of participating devices for a target MSE threshold $\tau=0.05$. The PMF is evaluated over the full range of user participation levels, from no active device ($|\mathcal{S}|=0$) to full participation ($|\mathcal{S}|=20$). The close agreement between analytical and simulated results confirms the correctness of the PMF expression in \eqref{disuser2}. FA-assisted configurations exhibit a significantly higher probability of activating a larger number of devices compared to FPA, demonstrating the critical role of FA-enabled spatial diversity in enhancing user participation.
As spatial correlation increases, i.e., from $\beta = 0$ to $\beta = 2$, the PMF gradually shifts toward smaller values of $|\mathcal{S}|$, reflecting reduced diversity. Nonetheless, even under correlated FA conditions, the achievable participation level remains markedly superior to that of the FPA system. This trend is further highlighted by the mode of the PMF, which is maximized under independent FA and progressively decreases with increasing correlation.
This trend is attributed to the loss of channel diversity under stronger spatial correlation. When FA ports are highly correlated, their channel realizations become similar, reducing the likelihood that port selection can significantly improve the effective channel gain. As a result, fewer users satisfy the participation condition.

\begin{figure}[htbp]
    \centering
    \includegraphics[width=8cm, height=4.5cm]{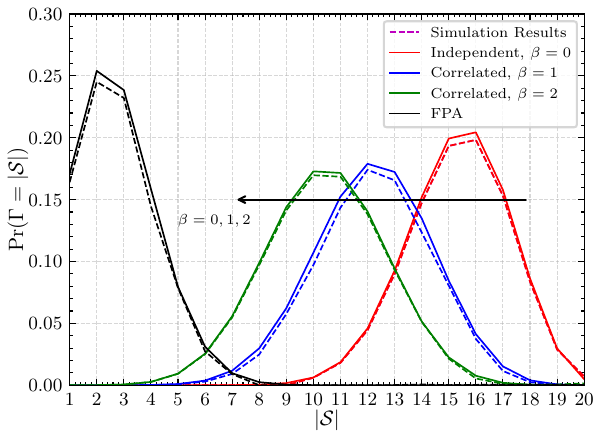}
    \caption{The PMF of the number of participating devices for $\tau = 0.05$.}  
    \label{fig:2}
\end{figure}

Fig.~\ref{fig:3} illustrates the CDF of the normalized MSE for different numbers of FA ports $N$. Increasing $N$ consistently improves $F_{\gamma}(\tau)$ across all configurations, thereby enhancing the probability of satisfying the aggregation MSE constraint. 
This improvement stems from the increased spatial selection diversity afforded by a larger FA aperture. However, the performance gain exhibits diminishing returns as $N$ grows, indicating a saturation effect that limits the incremental diversity obtained from additional ports. This saturation arises because the maximum channel gain approaches its statistical upper limit as $N$ increases, reducing the likelihood of further significant improvement. The independent FA configuration consistently achieves the best performance, while correlated FA scenarios experience a slower improvement rate due to reduced effective diversity. Importantly, all FA-assisted configurations significantly outperform the FPA benchmark over the entire range of $N$.

\begin{figure}[htbp]
    \centering
    \includegraphics[width=8cm, height=4.5cm]{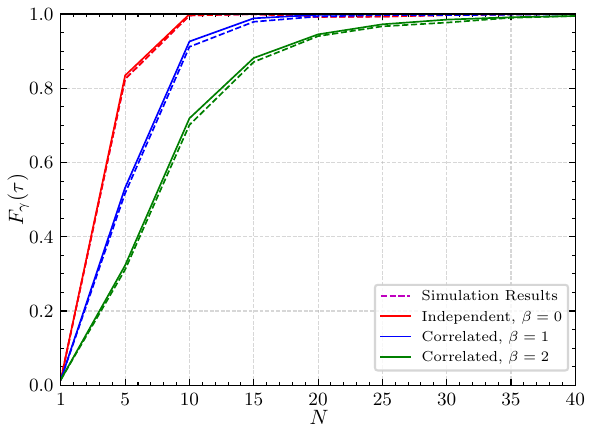}
    \caption{The CDF of the normalized MSE for $|\mathcal{S}|=15$ and $\tau = 0.05$.}  
    \label{fig:3}
\end{figure}

Fig.~\ref{fig:4} reports the probability of achieving full device participation, $\Pr(\Gamma = 20)$, as a function of $N$ for $\tau=0.05$.
It is observed that increasing the number of FA ports $N$ monotonically enhances the probability of full participation, $\Pr(\Gamma = 20)$, across all considered scenarios. 
This improvement is attributed to the increased spatial selection diversity offered by a larger FA aperture. Among the considered cases, the independent FA configuration ($\beta \rightarrow 0$) consistently achieves the highest probability, and approaches unity at a relatively small value of $N$. This indicates that, in the absence of spatial correlation, full-participation OTA-FL can be reliably supported with a limited number of FA ports.
In contrast, scenarios with spatially correlated FA ports ($\beta = 1$ and $\beta = 2$) exhibit a slower increase in $\Pr(\Gamma = 20)$ with respect to $N$, requiring a larger number of ports to approach comparable performance. This behavior reflects the reduction in effective diversity caused by spatial correlation. Nevertheless, even under strong correlation, FA-assisted systems significantly outperform the conventional FPA benchmark, which exhibits a markedly lower probability of full participation over the entire range of $N$. It is also worth noting that the performance improvement becomes less pronounced for large values of $N$, as the benefit of additional ports diminishes once sufficient spatial diversity has been achieved.

\begin{figure}[htbp]
    \centering
    \includegraphics[width=8cm, height=4.5cm]{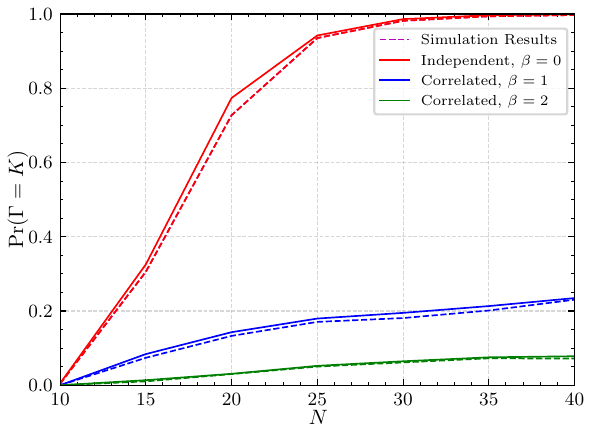}
    \caption{The PMF of full-participation OTA-FL for $\tau = 0.05$ and $|\mathcal{S}| = 20$.}  
    \label{fig:4}
\end{figure}

\subsection{Learning Performance With Real-World Data}

To further investigate the impact of spatial correlation in FAS on OTA-FL performance, we conduct extensive experiments on collaborative image classification using the MNIST dataset. The experimental setup consists of 100 clients, each assigned a distinct local dataset, and the training process is carried out over 100 communication rounds.
The MNIST dataset is split into 90\% training and 10\% testing sets. The training data are further partitioned into 100 non-overlapping subsets and uniformly assigned to the clients, resulting in an IID data distribution. For local training, the Adam optimizer is employed with a learning rate of $\gamma = 0.01$ and a mini-batch size of 32. Each client performs one local gradient update per communication round.

Each client employs a feedforward neural network composed of an input layer with 200 neurons, a single hidden layer with 200 neurons, and an output layer whose dimension matches the number of classes. ReLU activation functions are used in the input and hidden layers, while a softmax activation is applied at the output layer.

Local model updates are aggregated at the server using the proposed OTA-FL framework, which exploits the superposition property of the wireless multiple-access channel. As a benchmark, an ideal FL scenario assuming full device participation and noise-free aggregation is also considered. This benchmark serves as an upper performance bound and isolates the impact of communication impairments and user selection.

The aggregation MSE derived in the previous sections provides a direct link between communication reliability and learning performance in OTA-FL. Specifically, it quantifies the distortion in the aggregated global update caused by imperfect wireless superposition. A lower MSE preserves the accuracy of the aggregated update and enables faster convergence with improved accuracy and loss, whereas a higher MSE introduces noise into the update process and degrades convergence. Therefore, the analytical MSE results directly explain the trends observed in the following learning curves.

Fig.~\ref{fig:5}(a) illustrates the test accuracy versus the communication round index for different antenna configurations and spatial correlation levels. It is observed that FA-assisted OTA-FL consistently outperforms the conventional FPA system. This improvement is primarily attributed to the increased number of selected users per round and the reduced aggregation error enabled by the spatial movement gain of the FAS. Consequently, FA-assisted OTA-FL achieves higher steady-state accuracy, reduced performance fluctuations, and faster convergence.

Fig.~\ref{fig:5}(b) depicts the corresponding training loss evolution. FA-assisted OTA-FL exhibits faster loss decay and lower final loss compared to FPA. Increasing spatial correlation degrades convergence speed and steady-state performance due to reduced selection diversity and higher aggregation distortion. Nevertheless, even under correlated fading, FA-assisted OTA-FL maintains a clear advantage over the FPA benchmark. The gap to the ideal FL curve highlights the impact of wireless aggregation impairments, while confirming the effectiveness of FAS in enhancing OTA-FL performance with real-world data.

\begin{figure}[htbp]
    \centering
    \subfloat[]{
        \includegraphics[width=8cm, height=4.5cm]{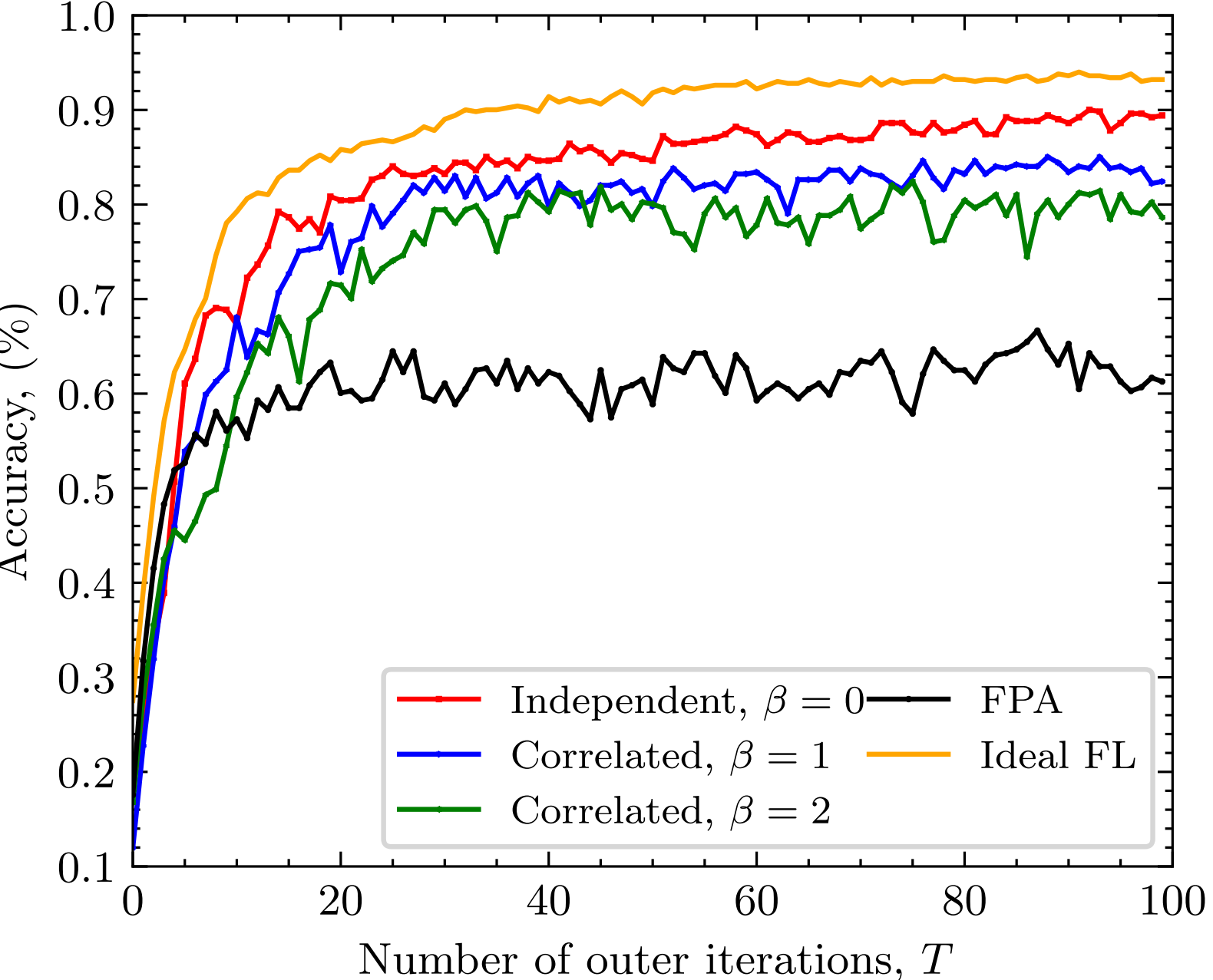}
        \label{fig:5a}
    }
    \hfill
    \subfloat[]{
        \includegraphics[width=8cm, height=4.5cm]{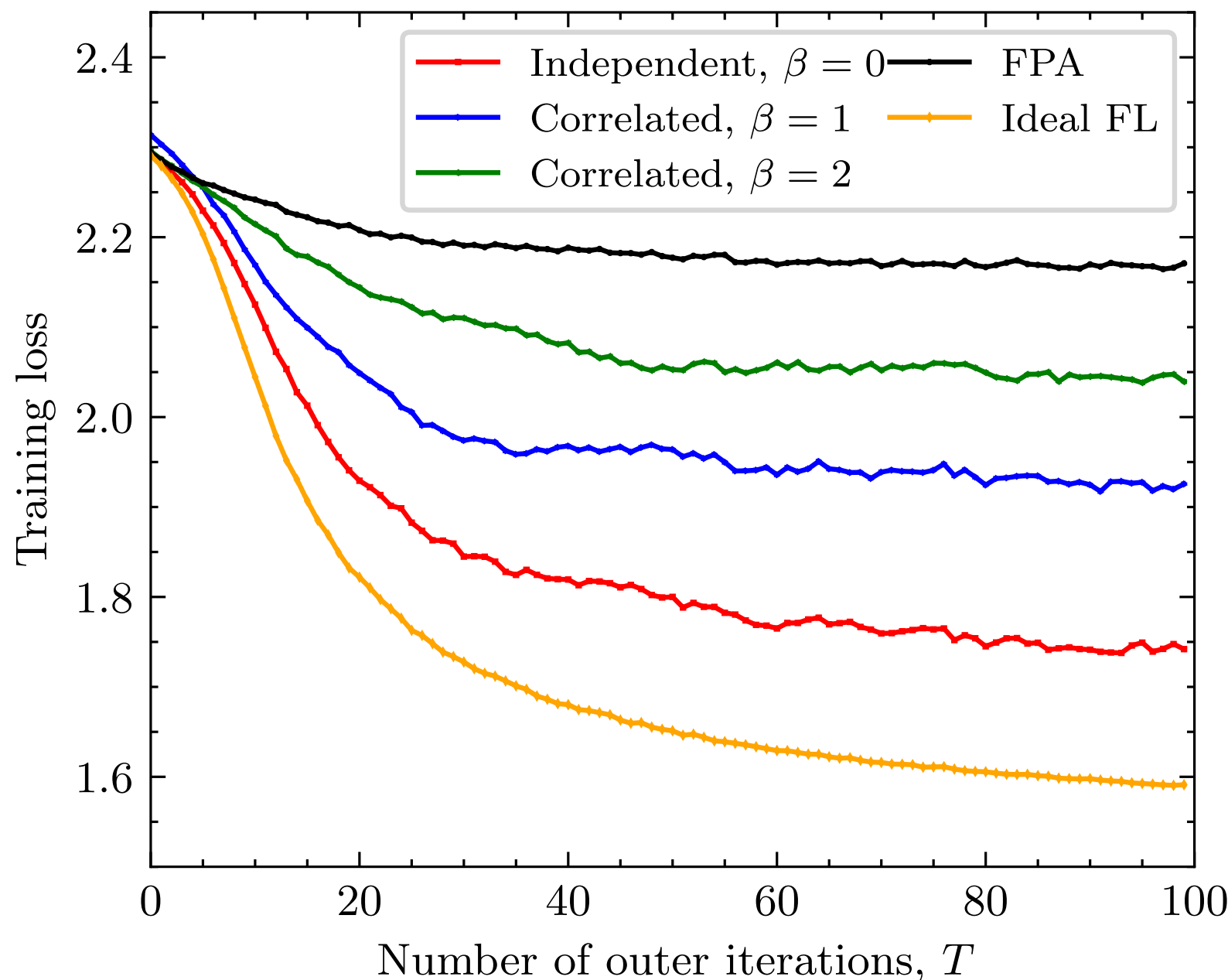}
        \label{fig:5b}
    }
    \caption{Learning performance of FA-assisted OTA-FL on the MNIST dataset with IID data distribution: (a) test accuracy and (b) training loss versus communication rounds.}
    \label{fig:5}
\end{figure}

\section{Conclusion}

This paper investigates the integration of FAS into OTA-FL, analyzing its performance gains and the impact of spatial correlation across FA ports. We establish an analytical framework for FA-assisted OTA-FL and derive tractable expressions for key performance metrics, including the aggregation MSE and the expected number of participating devices per communication round. A copula-based approach is employed to model spatially correlated Rayleigh fading across FA ports, characterizing the effective channel gain and providing new insight into the role of spatial correlation. Numerical simulations and learning experiments on a real-world dataset validate the analysis, demonstrating that FA integration significantly enhances OTA-FL performance, investigates the impact of port correlation, and highlights the effects of FA mobility on global model accuracy.

Despite these gains, practical considerations such as CSI acquisition overhead and unmodeled mutual coupling in dense FA configurations may affect real-world performance. Addressing these aspects through efficient channel acquisition and enhanced channel modeling remains an important direction for future work.

	% argument is your BibTeX string definitions and bibliography database(s)
	\bibliographystyle{IEEEtran}
	\bibliography{Main_Document}

@article{zhao2025fluid,
  title={Fluid Antenna Enabled Over-the-Air Federated Learning: Joint Optimization of Positioning, Beamforming, and User Selection},
  author={Zhao, Yang and Xu, Minrui and Wang, Ping and Niyato, Dusit},
  journal={arXiv preprint arXiv:2503.00011},
  year={2025},
  month={Feb.}
}

@article{besser2020copula,
  title={Copula-based bounds for multi-user communications--Part {II}: Outage Performance},
  author={Besser, Karl-Ludwig and Jorswieck, Eduard A},
  journal={IEEE Communications Letters},
  volume={25},
  month={Jan.},
  number={1},
  pages={8--12},
  year={2021}
}

@book{nelsen2006introduction,
	title={An introduction to copulas},
	author={Nelsen, Roger B},
	year={2006},
	publisher={Springer}
}

@article{ghadi2023fluid,
	title={Fluid antenna-assisted dirty multiple access channels over composite fading},
	author={Ghadi, Farshad Rostami and Wong, Kai-Kit and L{\'o}pez-Mart{\'\i}nez, F Javier and Chae, Chan-Byoung and Tong, Kin-Fai and Zhang, Yangyang},
	journal={IEEE Communications Letters},
	volume={28},
	number={2},
	pages={382--386},
	year={2023},
    month={Dec.}
}

@article{chen2023joint,
	title={Joint client selection and receive beamforming for over-the-air federated learning with energy harvesting},
	author={Chen, Caijuan and Chiang, Yi-Han and Lin, Hai and Lui, John C.S. and Ji, Yusheng},
	journal={IEEE Open Journal of the Communications Society},
	volume={4},
	pages={1127--1140},
	year={2023},
	month={May.},
	publisher={IEEE}
}

@ARTICLE{10538293,
  author={Wang, Zhibin and Zhao, Yapeng and Zhou, Yong and Shi, Yuanming and Jiang, Chunxiao and Letaief, Khaled B.},
  journal={IEEE Internet of Things Journal}, 
  title={Over-the-Air Computation for \text{6G}: Foundations, Technologies, and Applications}, 
month={May.},
  year={2024},
  volume={11},
  number={14},
  pages={24634-24658}}

@ARTICLE{10388035,
  author={Zhu, Jingyang and Shi, Yuanming and Zhou, Yong and Jiang, Chunxiao and Chen, Wei and Letaief, Khaled B.},
  journal={IEEE Internet of Things Journal}, 
  title={Over-the-Air Federated Learning and Optimization}, 
month={Jan.},
  year={2024},
  volume={11},
  number={10},
  pages={16996-17020}}

@INPROCEEDINGS{saeidp,
  author={Pakravan, Saeid and Ahmadzadeh, Mohsen and Zeng, Ming and Abed Hodtani, Ghosheh and Chouinard, J.-Y and Rusch, L. A},
 title={Robust resource allocation for over-the-air computation networks with fluid antenna array},
	author={Pakravan, Saeid and others},
	booktitle={Proc. IEEE Globecom Workshops (GC Wkshps)},
	month={},
	pages={},
	year={},
note={{C}ape {T}own, {S}outh {A}frica, pp. 1--6, Sep. 2024.}
}

@ARTICLE{9829367,
  author={Pham, Quoc-Viet and Zeng, Ming and Huynh-The, Thien and Han, Zhu and Hwang, Won-Joo},
  journal={IEEE Network}, 
  title={Aerial Access Networks for Federated Learning: Applications and Challenges}, 
month={Jul.},
  year={2022},
  volume={36},
  number={3},
  pages={159-166}}

@INPROCEEDINGS{saeidpwcnc,
  author={Ahmadzadeh, Mohsen and Pakravan, Saeid and Abed Hodtani, Ghosheh and Zeng, Ming and Chouinard, Jean-Yves and Rusch, L. A.},
  title={Enhanced Over-The-Air Federated Learning Using \text{AI}-Based Fluid Antenna System},
	booktitle={Proc. IEEE Wireless Communications and Networking Conference (WCNC)},
	month={},
	pages={},
	year={},
note={{M}ilan, {I}taly, pp. 1--6, Mar. 2025.}
}

@article{ahmadzadeh2025ai,
  title={{AI}-based fluid antenna design for client selection in over-the-air federated learning},
  author={Ahmadzadeh, Mohsen and Pakravan, Saeid and Hodtani, Ghosheh Abed and Zeng, Ming and Ye, Qiang John and Chouinard, Jean-Yves and Rusch, Leslie A},
  journal={IEEE Internet of Things Journal},
year={2025},
month={Oct.},
  volume={12},
  number={20},
  pages={42549-42558}
}

@inproceedings{skouroumounis2024swipt,
  title={{SWIPT} in {FA}-enabled cellular networks: A stochastic geometry copula-based approach},
  author={Skouroumounis, Christodoulos and Krikidis, Ioannis},
  	booktitle={Proc. IEEE International Conference on Communications (ICC)},
	month={},
	pages={},
	year={},
note={{D}enver, {USA}, pp. 2360-2365, Aug. 2024.}
}

@article{ahmadzadeh2025advancing,
  title={Advancing Over-the-Air Federated Learning through Deep Reinforcement Learning in {UAV}-Assisted Networks with Movable Antennas},
  author={Ahmadzadeh, Mohsen and Pakravan, Saeid and Hodtani, Ghosheh Abed},
  journal={Computer and Knowledge Engineering},
year={2025},
month={May.},
  volume={8},
  number={2},
  pages={1-10}
}

@article{ghadi2024physical,
 author={Rostami Ghadi, Farshad and Wong, Kai-Kit and Javier López-Martínez, F. and Kiat New, Wee and Xu, Hao and Chae, Chan-Byoung},
  journal={IEEE Transactions on Wireless Communications}, 
  title={Physical Layer Security Over Fluid Antenna Systems: Secrecy Performance Analysis}, 
  year={2024},
  month={Sep.},
  volume={23},
  number={12},
  pages={18201-18213}}

@INPROCEEDINGS{10881858,
  author={Ahmadzadeh, Mohsen and Pakravan, Saeid and Hodtani, Ghosheh Abed and Zeng, Ming and Chouinard, Jean-Yves},
  booktitle={Proc. of IEEE Middle East Conference on Communications and Networking (\text{MECOM})}, 
  title={Deep Reinforcement Learning for Robust \text{RIS}-Aided Over-the-Air Federated Learning in Cognitive Radio}, 
booktitle={Proc. IEEE Middle East Conference on Communications and Networking (MECOM)},
	month={},
	pages={},
	year={},
note={{A}bu {D}habi, {U}nited {A}rab {E}mirates, pp. 368-373, Nov. 2024.}
}

@INPROCEEDINGS{7777777,
  author={X. Wang and Y. Chen and Q. Ye and O. A. Dobre},
  title={Connectivity Enrichment for Decentralized Federated Learning Networks with Teleportation}, 
	booktitle={Proc. IEEE International Conference on Computer Communications (\text{INFOCOM WKSHPS})},
	month={},
	pages={},
	year={},
note={ {L}ondon, {U}nited {K}ingdom, pp. 1--6, May. 2025.}
}

@ARTICLE{18952884,
  author={Yang, Kai and Jiang, Tao and Shi, Yuanming and Ding, Zhi},
  journal={IEEE Transactions on Wireless Communications}, 
  title={Federated Learning via Over-the-Air Computation}, 
  year={2020},
month={Jan.},
  volume={19},
  number={3},
  pages={2022-2035}}

@ARTICLE{10092857,
  author={Şahin, Alphan and Yang, Rui},
  journal={IEEE Communications Surveys \& Tutorials}, 
  title={A Survey on Over-the-Air Computation}, 
  month={Apr.},
  year={2023},
  volume={25},
  number={3},
  pages={1877-1908}}

@ARTICLE{9141214,
  author={Niknam, Solmaz and Dhillon, Harpreet S. and Reed, Jeffrey H.},
  journal={IEEE Communications Magazine}, 
  title={Federated Learning for Wireless Communications: Motivation, Opportunities, and Challenges}, 
  year={2020},
  month={Jun.},
  volume={58},
  number={6},
  pages={46-51}}

@ARTICLE{9264742,
  author={Yang, Zhaohui and Chen, Mingzhe and Saad, Walid and Hong, Choong Seon and Shikh-Bahaei, Mohammad},
  journal={IEEE Transactions on Wireless Communications}, 
  title={Energy Efficient Federated Learning Over Wireless Communication Networks}, 
  year={2021},
  month={Nov.},
  volume={20},
  number={3},
  pages={1935-1949}}

@ARTICLE{10988827,
  author={Mohajer Hamidi, Shayan and Bereyhi, Ali and Asaad, Saba and Vincent Poor, H.},
  journal={IEEE Communications Letters}, 
  title={Over-the-Air Fair Federated Learning via Multi-Objective Optimization}, 
  year={2025},
  month={May.},
  volume={29},
  number={7},
  pages={1549-1553}}

@ARTICLE{10932699,
  author={Yan, Na and Wang, Kezhi and Zhi, Kangda and Pan, Cunhua and Chai, Kok Keong and Poor, H. Vincent},
  journal={IEEE Transactions on Wireless Communications}, 
  title={Secure and Private Over-the-Air Federated Learning: Biased and Unbiased Aggregation Design}, 
  year={2025},
  month={Mar.},
  volume={24},
  number={7},
  pages={5901-5916}}

@ARTICLE{9738803,
  author={Nabil, Yasser and ElSawy, Hesham and Al-Dharrab, Suhail and Mostafa, Hassan and Attia, Hussein},
  journal={IEEE Internet of Things Journal}, 
  title={Data Aggregation in Regular Large-Scale {IoT} Networks: Granularity, Reliability, and Delay Tradeoffs}, 
  year={2022},
   month={Mar.},
  volume={9},
  number={18},
  pages={17767-17784}}

@ARTICLE{10261444,
  author={Sifaou, Houssem and Li, Geoffrey Ye},
  journal={IEEE Transactions on Wireless Communications}, 
  title={Over-The-Air Federated Learning Over Scalable Cell-Free Massive {MIMO}}, 
  year={2024},
   month={Sep.},
  volume={23},
  number={5},
  pages={4214-4227}}

@ARTICLE{10444710,
  author={Asaad, Saba and Tabassum, Hina and Ouyang, Chongjun and Wang, Ping},
  journal={IEEE Transactions on Wireless Communications}, 
  title={Joint Antenna Selection and Beamforming for Massive {MIMO}-Enabled Over-the-Air Federated Learning}, 
  year={2024},
  month={Feb.},
  volume={23},
  number={8},
  pages={8603-8618}}

@ARTICLE{10381609,
  author={Liu, Hang and Yan, Jia and Zhang, Ying-Jun Angela},
  journal={IEEE Transactions on Wireless Communications}, 
  title={Differentially Private Over-the-Air Federated Learning Over {MIMO} Fading Channels}, 
  year={2024},
  month={Jan.},
  volume={23},
  number={8},
  pages={8232-8247}}

@ARTICLE{9829190,
  author={Zheng, Jingheng and Tian, Hui and Ni, Wanli and Ni, Wei and Zhang, Ping},
  journal={IEEE Transactions on Wireless Communications}, 
  title={Balancing Accuracy and Integrity for Reconfigurable Intelligent Surface-Aided Over-the-Air Federated Learning}, 
  year={2022},
  month={Jul.},
  volume={21},
  number={12},
  pages={10964-10980}}

@ARTICLE{10912454,
  author={Zhang, Xinran and Tian, Hui and Ni, Wanli and Yang, Zhaohui},
  journal={IEEE Internet of Things Journal}, 
  title={Joint Beamforming Design for Multifunctional {RIS}-Aided Over-the-Air Federated Learning}, 
  year={2025},
  month={Jun.},
  volume={12},
  number={12},
  pages={21720-21739}}

@ARTICLE{10529197,
  author={Kim, Minsik and Park, Daeyoung},
  journal={IEEE Wireless Communications Letters}, 
  title={Reconfigurable Intelligent Surfaces-Aided Federated Learning in Over-the-Air Computation}, 
  year={2024},
   month={May.},
  volume={13},
  number={7},
  pages={1983-1987}}

@ARTICLE{11175437,
  author={Hong, Hanjiang and Wong, Kai-Kit and Chae, Chan-Byoung and Xu, Hao and Guo, Xinghao and Ghadi, Farshad Rostami and Chen, Yu and Xu, Yin and Liu, Baiyang and Tong, Kin-Fai and Zhang, Yangyang},
  journal={IEEE Transactions on Network Science and Engineering}, 
  title={A Contemporary Survey on Fluid Antenna Systems: Fundamentals and Networking Perspectives}, 
  year={2025},
  month={Sep.},
  volume={13},
  number={},
  pages={2305-2328}}

@ARTICLE{9715064,
  author={Chai, Zhi and Wong, Kai-Kit and Tong, Kin-Fai and Chen, Yu and Zhang, Yangyang},
  journal={IEEE Communications Letters}, 
  title={Port Selection for Fluid Antenna Systems}, 
  year={2022},
  month={Feb.},
  volume={26},
  number={5},
  pages={1180-1184}}

@ARTICLE{10146274,
  author={Wong, Kai-Kit and New, Wee Kiat and Hao, Xu and Tong, Kin-Fai and Chae, Chan-Byoung},
  journal={IEEE Communications Letters}, 
  title={Fluid Antenna System—Part {I}: Preliminaries}, 
  year={2023},
  month={Jul.},
  volume={27},
  number={8},
  pages={1919-1923}}

@ARTICLE{10753482,
  author={New, Wee Kiat and Wong, Kai-Kit and Xu, Hao and Wang, Chao and Ghadi, Farshad Rostami and Zhang, Jichen and Rao, Junhui and Murch, Ross and Ramírez-Espinosa, Pablo and Morales-Jimenez, David and Chae, Chan-Byoung and Tong, Kin-Fai},
  journal={IEEE Communications Surveys \& Tutorials}, 
  title={A Tutorial on Fluid Antenna System for {6G} Networks: Encompassing Communication Theory, Optimization Methods and Hardware Designs}, 
  year={2025},
  month={Nov.},
  volume={27},
  number={4},
  pages={2325-2377}}

@ARTICLE{9998496,
  author={Van Chien, Trinh and Lagunas, Eva and Hoang, Tiep M. and Chatzinotas, Symeon and Ottersten, Björn and Hanzo, Lajos},
  journal={IEEE Transactions on Communications}, 
  title={Space-Terrestrial Cooperation Over Spatially Correlated Channels Relying on Imperfect Channel Estimates: Uplink Performance Analysis and Optimization}, 
  year={2023},
  month={Dec.},
  volume={71},
  number={2},
  pages={773-791}}

@ARTICLE{10623405,
  author={Ramírez-Espinosa, Pablo and Morales-Jimenez, David and Wong, Kai-Kit},
  journal={IEEE Transactions on Wireless Communications}, 
  title={A New Spatial Block-Correlation Model for Fluid Antenna Systems}, 
  year={2024},
  month={Aug.},
  volume={23},
  number={11},
  pages={15829-15843}}

@ARTICLE{10716282,
  author={Lai, Xiazhi and Yao, Junteng and Zhi, Kangda and Wu, Tuo and Morales-Jimenez, David and Wong, Kai-Kit},
  journal={IEEE Transactions on Vehicular Technology}, 
  title={{FAS-RIS}: A Block-Correlation Model Analysis}, 
  year={2025},
  month={Oct.},
  volume={74},
  number={2},
  pages={3412-3417}}

@ARTICLE{11106811,
  author={Pakravan, Saeid and Ahmadzadeh, Mohsen and Zeng, Ming and Yang, Zhaohui and Hodtani, Ghosheh Abed and Chouinard, Jean-Yves and Pham, Quoc-Viet},
  journal={IEEE Transactions on Vehicular Technology}, 
  title={Fluid Antenna-Assisted Uplink {NOMA} Networks Under Imperfect {SIC}}, 
  year={2025},
   month={Aug.},
  volume={75},
  number={1},
  pages={1689-1694}}

@article{pakravan2026fluid1,
  title={Fluid Antenna Systems under Channel Uncertainty and Hardware Impairments: Trends, Challenges, and Future Research Directions},
  author={Pakravan, Saeid and Ahmadzadeh, Mohsen and Zeng, Ming and Ajib, Wessam and Wang, Ji and Li, Xingwang},
  journal={arXiv preprint arXiv:2601.22989},
  year={2026},
  month={Jan.}
}

@article{silva2008copula,
  author    = {Silva, Ralph dos Santos and Hedibert Freitas Lopes},
  title     = {Copula, marginal distributions and model selection: A Bayesian note},
  journal   = {Statistics and Computing},
  volume    = {18},
  number    = {3},
  pages     = {313--320},
  year      = {2008},
  month     = {Mar.}
}

@ARTICLE{10258328,
  author={Bodet, Duschia M. and Jornet, Josep M.},
  journal={IEEE Open Journal of the Communications Society}, 
  title={Directional Antennas for Sub-{THz} and {THz} {MIMO} Systems: Bridging the Gap Between Theory and Implementation}, 
  year={2023},
  month     = {Sep.},
  volume={4},
  number={},
  pages={2261-2273}}

@ARTICLE{10906511,
  author={Zhu, Lipeng and Ma, Wenyan and Mei, Weidong and Zeng, Yong and Wu, Qingqing and Ning, Boyu and Xiao, Zhenyu and Shao, Xiaodan and Zhang, Jun and Zhang, Rui},
  journal={IEEE Communications Surveys \& Tutorials}, 
  title={A Tutorial on Movable Antennas for Wireless Networks}, 
  year={2025},
  month     = {Feb.},
  volume={28},
  number={},
  pages={3002-3054}}

@ARTICLE{9849110,
  author={Peng, Jinlin and Tang, Wenfei and Zhang, Hongtao},
  journal={IEEE Wireless Communications Letters}, 
  title={Directional Antennas Modeling and Coverage Analysis of {UAV}-Assisted Networks}, 
  year={2022},
  month     = {Aug.},
  volume={11},
  number={10},
  pages={2175-2179}}

@ARTICLE{9625822,
  author={Elbir, Ahmet M. and Coleri, Sinem},
  journal={IEEE Transactions on Wireless Communications}, 
  title={Federated Learning for Channel Estimation in Conventional and {RIS}-Assisted Massive {MIMO}}, 
  year={2022},
    month     = {Nov.},
  volume={21},
  number={6},
  pages={4255-4268}}

@ARTICLE{9626135,
  author={Ni, Wanli and Liu, Yuanwei and Yang, Zhaohui and Tian, Hui and Shen, Xuemin},
  journal={IEEE Internet of Things Journal}, 
  title={Federated Learning in Multi-{RIS}-Aided Systems}, 
  year={2022},
  month     = {Nov.},
  volume={9},
  number={12},
  pages={9608-9624}}

@ARTICLE{10649032,
  author={Sun, Pengcheng and Liu, Erwu and Ni, Wei and Wang, Rui and Xing, Zhe and Li, Bofeng and Jamalipour, Abbas},
  journal={IEEE Transactions on Communications}, 
  title={Reconfigurable Intelligent Surface-Assisted Wireless Federated Learning With Imperfect Aggregation}, 
  year={2025},
  month     = {Feb.},
  volume={73},
  number={2},
  pages={1058-1071}}

@ARTICLE{11342412,
  author={Ahmadzadeh, Mohsen and Pakravan, Saeid and Hodtani, Ghosheh Abed and Zeng, Ming and Li, Xingwang and Zhang, Ning and Chouinard, Jean-Yves},
  journal={IEEE Transactions on Vehicular Technology}, 
  title={{AI}-Enhanced {RIS}-Aided Cognitive Radio Network: Integrating Communication and Over-the-Air Federated Learning Users}, 
  year={2026},
  month     = {Jan.},
  volume={},
  number={},
  pages={1-14}}

@ARTICLE{10363658,
  author={Pakravan, Saeid and Chouinard, Jean-Yves and Zeng, Ming and Li, Xingwang and Hao, Wanming and Dobre, Octavia A.},
  journal={IEEE Internet of Things Journal}, 
  title={Physical-Layer Security of {RIS}-Assisted Networks Over Correlated Fisher-Snedecor {F} Fading Channels}, 
  year={2024},
  month     = {May.},
  volume={11},
  number={9},
  pages={15152-15165}}

	\vfill
	
\end{document}